\title{Lower Bounds for Dynamic Programming on Planar Graphs of Bounded Cutwidth}
\author{Bas A.M. van Geffen}{University of Oxford}{bas.vangeffen@kellogg.ox.ac.uk}{}{}
\author{Bart M.P. Jansen}{Eindhoven University of Technology}{b.m.p.jansen@tue.nl}{http://orcid.org/0000-0001-8204-1268}{Supported by NWO Gravitation grant ``Networks'' and NWO Veni grant ``Frontiers in Parameterized Preprocessing''.}
\author{Arnoud A.W.M. de Kroon}{University of Oxford}{noud.kroon@keble.ox.ac.uk}{}{}
\author{Rolf Morel}{University of Oxford}{rolf.morel@sjc.ox.ac.uk}{}{}
\authorrunning{B.A.M. van Geffen, B.M.P. Jansen, A.A.W.M. de Kroon, and R. Morel}%mandatory. First: Use abbreviated first/middle names. Second (only in severe cases): Use first author plus 'et. al.'
\subjclass{%
\ccsdesc[500]{Theory of computation~Graph algorithms analysis},
\ccsdesc[500]{Theory of computation~Parameterized complexity and exact algorithms}}
\keywords{planarization, dominating set, cutwidth, lower bounds, strong exponential time hypothesis}
\theoremstyle{plain}
\newtheorem{claim}[theorem]{Claim}
\newtheorem{proposition}[theorem]{Proposition}
\newtheorem{observation}[theorem]{Observation}
\newtheorem*{seth}{Strong Exponential Time Hypothesis}
\newcommand{\qSAT}{\textsc{$q$-SAT}\xspace}
\newcommand{\IS}{\ensuremath{\mathrm{\textsc{is}}}\xspace}
\newcommand{\DS}{\ensuremath{\mathrm{\textsc{ds}}}\xspace}
\newcommand{\Hvc}{\ensuremath{H_{\mathrm{\textsc{vc}}}}\xspace}
\newcommand{\Oh}{\ensuremath{\mathcal{O}}\xspace}
\newcommand{\tw}{\ensuremath{\mathrm{tw}}\xspace}
\newcommand{\cutw}{\ensuremath{\mathrm{ctw}}\xspace}
\newcommand{\pw}{\ensuremath{\mathrm{pw}}\xspace}
\newcommand{\optIS}{\ensuremath{\textsc{opt}_{\textsc{\IS}}}\xspace}
\newcommand{\optDS}{\ensuremath{\textsc{opt}_{\textsc{\DS}}}\xspace}
\newcommand{\bigmid}{\,\big\vert\,}
\newcommand{\yes}{\textsc{yes}\xspace}
\let\plainqed\qedsymbol
\newcommand{\claimqed}{$\lrcorner$}
\newenvironment{claimproof}{\begin{proof}\renewcommand{\qedsymbol}{\claimqed}}{\end{proof}\renewcommand{\qedsymbol}{\plainqed}}
\begin{document}

\maketitle

\begin{abstract}
Many combinatorial problems can be solved in time~$\Oh^*(c^{tw})$ on graphs of treewidth~$tw$, for a problem-specific constant~$c$. In several cases, matching upper and lower bounds on~$c$ are known based on the Strong Exponential Time Hypothesis (SETH). In this paper we investigate the complexity of solving problems on graphs of bounded cutwidth, a graph parameter that takes larger values than treewidth. We strengthen earlier treewidth-based lower bounds to show that, assuming SETH, \textsc{Independent Set} cannot be solved in~$O^*((2-\varepsilon)^{\cutw})$ time, and \textsc{Dominating Set} cannot be solved in~$O^*((3-\varepsilon)^{\cutw})$ time. By designing a new crossover gadget, we extend these lower bounds even to planar graphs of bounded cutwidth or treewidth. Hence planarity does not help when solving \textsc{Independent Set} or \textsc{Dominating Set} on graphs of bounded width. This sharply contrasts the fact that in many settings, planarity allows problems to be solved much more efficiently.
\end{abstract}

\section{Introduction}
Dynamic programming on graphs of bounded treewidth is a powerful tool in the algorithm designer's toolbox, which has many applications~(cf.~\cite{BodlaenderK08}) and is captured by several meta-theorems~\cite{Courcelle90,Pilipczuk11}. Through clever use of techniques such as M\"obius transformation, fast subset convolution~\cite{BjorklundHKK07,RooijBR09}, cut \& count~\cite{CyganNPPRW11}, and representative sets~\cite{BodlaenderCKN15,CyganKN13,FominLPS16}, algorithms were developed that can solve numerous combinatorial problems on graphs of treewidth~$tw$ in~$\Oh^*(c^{tw})$ time, for a problem-specific constant~$c$. In recent work~\cite{LokshtanovMS11}, it was shown that under the \emph{Strong Exponential Time Hypothesis} (SETH, see~\cite{ImpagliazzoP01,ImpagliazzoPZ01}), the base of the exponent~$c$ achieved by the best-known algorithm is actually optimal for \textsc{Dominating Set} ($c=3$) and \textsc{Independent Set} ($c=2$), amongst others. This prompts the following questions:
\begin{enumerate}
	\item Do faster algorithms exist for bounded-treewidth graphs that are \emph{planar}? \label{q:planar}
	\item Do faster algorithms exist for a more restrictive graph parameter, such as cutwidth? \label{q:cutwidth}
\end{enumerate}
\noindent It turns out that these questions are related, because the nature of cutwidth allows crossover gadgets to be inserted to planarize a graph without increasing its width significantly. 

Before going into our results, we briefly motivate these questions. There is a rich bidimensionality theory (cf.~\cite{DemaineH08}) of how the planarity of a graph can be exploited to obtain better algorithms than in the nonplanar case, leading to what has been called the \emph{square-root phenomenon}~\cite{Marx13}: in several settings, parameterized algorithms on planar graphs can be faster by a square-root factor in the exponent, compared to their nonplanar counterparts. Hence it may be tempting to believe that problems on bounded-width graphs can be solved more efficiently when they are planar. Lokshtanov et al.~\cite[\S 9]{LokshtanovMS11} explicitly ask whether their SETH-based lower bounds continue to apply for planar graphs. The same problem is posed by Baste and Sau~\cite[p.~3]{BasteS15} in their investigation on the influence of planarity when solving connectivity problems parameterized by treewidth. This motivates question~\ref{q:planar}.

When faced with lower bounds for the parameterization by treewidth, it is natural to investigate whether these continue to hold for more restrictive graph parameters. We work with the parameter cutwidth since it is one of the classic graph layout parameters (cf.~\cite{DiazPS02}) which takes larger values than treewidth~\cite{KorachS93}, and has been the subject of frequent study~\cite{GiannopoulouPRT16,ThilikosSB05,ThilikosSB05a}. In their original work, Lokshtanov et al.~\cite{LokshtanovMS11} showed that their lower bounds also hold for pathwidth instead of treewidth. However, the parameterization by \emph{cutwidth} has so far not been considered, which leads us to question~\ref{q:cutwidth}. (See Section~\ref{sec:preliminaries} for the definition of cutwidth.)

\subparagraph{Our results} We answer questions~\ref{q:planar} and~\ref{q:cutwidth} for the problems \textsc{Independent Set} and \textsc{Dominating Set}, which are formally defined in Section~\ref{sec:preliminaries}. Our conceptual contribution towards answering question~\ref{q:planar} comes from the following insight: any graph~$G$ can be drawn in the plane (generally with crossings) such that the graph~$G'$ obtained by replacing each crossing by a vertex of degree four, does not have larger cutwidth than~$G$. Hence the property of having bounded cutwidth can be preserved while planarizing the graph, which was independently\footnote{We learned of Eppstein's result while a previous version of this work was under submission at a different venue; see Footnote 2 in~\cite{Eppstein17a}. Our previous manuscript, cited by Eppstein, was later split into two separate parts due to its excessive length. The present paper is one part, and~\cite{JansenN17} is the other.} discovered by Eppstein~\cite{Eppstein17a}. When we planarize by replacing each crossing by a planar crossover gadget~$H$ instead of a single vertex, then we obtain~$\cutw(G') \leq \cutw(G) + \cutw(H) + 4$ if the endpoints of the crossing edges each obtain at most one neighbor in the crossover gadget. This gives a means to reduce a problem instance on a general graph of bounded cutwidth to a planar graph of bounded cutwidth, if a suitable crossover gadget is available. The parameter cutwidth is special in this regard: one cannot planarize a drawing of~$K_{3,n}$ while keeping the pathwidth or treewidth constant~\cite{Eppstein16,Eppstein17a}.

For the \textsc{Independent Set} problem, the crossover gadget developed by Garey, Johnson, and Stockmeyer~\cite{GareyJS76} can be used in the process described above. Together with the observation that the SETH-based lower bound construction by Lokshtanov et al.~\cite{LokshtanovMS11} for the treewidth parameterization also works for the cutwidth parameterization, this yields our first result.\footnote{The analogous lower bound of~$\Omega((2-\varepsilon)^{k}$) for solving \textsc{Independent Set} on planar graphs of pathwidth~$k$ was already observed by Jansen and Wulms~\cite{JansenW16}, based on an elaborate ad-hoc argument.} 

\newcommand{\indsetthm}{Assuming SETH, there is no~$\varepsilon > 0$ such that \textsc{Independent Set} on a planar graph~$G$ given along with a linear layout of cutwidth~$k$ can be solved in time~$\Oh^*((2-\varepsilon)^k)$.}
\begin{theorem} \label{thm:independentset:lb}
\indsetthm
\end{theorem}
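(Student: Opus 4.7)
My plan is to obtain the planar cutwidth lower bound in two stages: first establish the analogous statement for general (non-planar) graphs parameterized by cutwidth, then apply the crossover-gadget planarization described in the introduction.

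\textbf{Stage 1: a cutwidth lower bound for \IndependentSet.} I would revisit the SETH-based reduction of Lokshtanov, Marx, and Saurabh that rules out $\Oh^*((2-\varepsilon)^{\pw})$ algorithms for \IndependentSet on graphs given with a path decomposition of width $\pw$. The reduction, given an $s$-SAT instance $\varphi$ on $n$ variables and $m$ clauses, builds a graph whose ``state gadgets'' encode groups of $\log_2(2-\varepsilon^{-1})$ variables and whose pathwidth is roughly $n/\log_2(2-\varepsilon^{-1})$. I would verify that the same gadgets admit a \emph{linear layout} of cutwidth bounded by the same quantity: each state gadget is small and communicates with the rest of the construction only through a constant-size interface, and the clause-checking gadgets can be placed between consecutive ``time slices'' so that at every gap, only a constant number of edges in addition to those counted by the path decomposition cross the cut. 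Concatenating the gadgets along the natural left-to-right progression of the construction yields a linear order whose cutwidth exceeds the pathwidth by only an additive constant. Combining this with the SETH-based lower bound for $s$-SAT then gives: assuming SETH, there is no $\varepsilon>0$ such that \IndependentSet on a graph $G$ with a linear layout of cutwidth $k$ can be solved in $\Oh^*((2-\varepsilon)^k)$ time.

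\textbf{Stage 2: planarize with the Garey--Johnson--Stockmeyer gadget.} Given the instance $(G,\sigma)$ from Stage~1, I would draw $G$ in the plane using the ordering $\sigma$ along a horizontal axis and routing each edge as a curve above the axis. Any two edges whose intervals properly overlap in $\sigma$ cross at most once, so we obtain a drawing whose number of crossings is controlled. At each crossing I substitute the planar crossover gadget $H$ of Garey, Johnson, and Stockmeyer, which replaces a pair of crossing edges by a planar subgraph with four attachment points and a fixed additive offset $\alpha_H$ between the maximum independent sets of the original pair of edges and the gadget. After replacing every crossing we obtain a planar graph $G'$ with $\optIS(G') = \optIS(G) + \alpha_H \cdot (\text{\#crossings})$. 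Using the planarization-preserves-cutwidth principle articulated in the introduction (each crossing adds at most $\cutw(H)+4$ to the cutwidth, and the gadgets at distinct crossings do not share cut-contributions once we insert them consecutively into the layout between the relevant pair of consecutive positions), I get $\cutw(G') \le \cutw(G) + c_H$ for a constant $c_H$ depending only on $H$, together with an explicit layout.

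\textbf{Conclusion.} A hypothetical $\Oh^*((2-\varepsilon)^{\cutw})$ algorithm for planar \IndependentSet would, applied to $(G',\sigma')$, solve the original \IndependentSet instance on $G$ (and hence the $s$-SAT instance of Stage~1) in time $\Oh^*((2-\varepsilon)^{k+c_H}) = \Oh^*((2-\varepsilon)^{k})$, contradicting Stage~1 and hence SETH.

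\textbf{Main obstacle.} The delicate part is the bookkeeping in Stage~1: verifying that the Lokshtanov--Marx--Saurabh gadgetry can be arranged along a single linear axis so that the cutwidth, not merely the pathwidth, remains $n/\log_2(2-\varepsilon^{-1}) + \Oh(1)$. Once a suitable layout is fixed, the transition to Stage~2 is clean because the crossover substitution is local and the additional cut-contribution of each planar gadget $H$ is absorbed into an additive constant by the ``at most one neighbour per endpoint'' property noted in the introduction.
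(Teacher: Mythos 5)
Your proposal is correct and follows essentially the same route as the paper: first re-analyze the Lokshtanov--Marx--Saurabh construction to show it admits a linear layout whose cutwidth matches its pathwidth up to an additive constant (the paper's Theorem~\ref{thm:ind:set:cutwidth:lb}), then planarize with the Garey--Johnson--Stockmeyer crossover gadget while adding only $\cutw(H)+4$ to the cutwidth (the paper's Theorem~\ref{thm:cutwidth:planarization}). One small inaccuracy in your sketch: for \IndependentSet the LMS construction does not group variables (each variable gets its own path, giving cutwidth $n+\Oh(1)$; grouping is only needed for \DominatingSet where the base is $3$), and the GJS gadget is stated for \VertexCover, so one must pass through the complementation $\optIS = n - \optVC$ to get the additive offset for \IndependentSet --- but neither point affects the validity of the approach.
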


For the \textsc{Dominating Set} problem, more work is needed to obtain a lower bound for planar graphs of bounded cutwidth. While the lower bound construction of Lokshtanov et al.~\cite{LokshtanovMS11} also works for the parameter cutwidth after a minor tweak, no crossover gadget for the \textsc{Dominating Set} problem was known. Our main technical contribution therefore consists of the design of a crossover gadget for \textsc{Dominating Set}, which we believe to be of independent interest. Together with the framework above, this gives our second result.

\newcommand{\domsetthm}{Assuming SETH, there is no~$\varepsilon > 0$ such that \textsc{Dominating Set} on a planar graph~$G$ given along with a linear layout of cutwidth~$k$ can be solved in time~$\Oh^*((3-\varepsilon)^k)$.}
\begin{theorem} \label{thm:domset:planar:lb}
\domsetthm
\end{theorem}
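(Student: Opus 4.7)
The plan is to combine three ingredients: a SETH-based lower bound for \textsc{Dominating Set} parameterized by \emph{cutwidth} in the nonplanar setting, a tailored planar crossover gadget for \textsc{Dominating Set}, and the generic planarization framework sketched in the introduction.

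\textbf{Step 1, the cutwidth reduction.} I would first lift the Lokshtanov--Marx--Saurabh pathwidth lower bound for \textsc{Dominating Set} to cutwidth. Their reduction from $\qSAT$ builds a grid-like graph of horizontal ``path gadgets'' (each encoding a block of variables of a truth assignment) and vertical ``clause checkers''. The natural left-to-right ordering of the vertices already has small pathwidth, but some long horizontal edges cross many cuts and inflate the cutwidth; the ``minor tweak'' promised in the text consists of re-routing such edges through chains of constant-size relay gadgets placed between consecutive clause checkers, so that every cut is crossed by only $\Oh(1)$ edges per row. This yields a linear layout of cutwidth $\Oh(k)$, where $k$ is the pathwidth of the original construction. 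Any $\Oh^*((3-\varepsilon)^{\cutw})$-algorithm would then give an $\Oh^*((3-\varepsilon')^{n})$-algorithm for $\qSAT$, contradicting SETH.

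\textbf{Step 2, the crossover gadget (main obstacle).} The central technical contribution is a planar crossover gadget $H$: a planar graph with four terminals $a, b, c, d$ on its outer face (in this cyclic order) together with an integer $c_H$ such that, whenever $H$ is inserted in place of a crossing between edges $\{a',c'\}$ and $\{b',d'\}$ of a host graph $G$ --- with each of $a, b, c, d$ attached by a single edge to the corresponding endpoint --- the minimum dominating set size increases by exactly $c_H$, regardless of how the rest of $G$ interacts with $a', b', c', d'$. The gadget must itself have constant cutwidth. The intended design uses forced pendants (degree-one vertices whose unique neighbors are compelled into every optimum dominating set) arranged so that the only cheap ways of dominating an internal ``core'' of $H$ either simulate the edge $a'c'$ or simulate the edge $b'd'$, with no leakage between the two independent pairs. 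Correctness would be verified by a case analysis over the $2^4$ membership patterns of $\{a', b', c', d'\}$ in a dominating set of $G$. This step is the main obstacle, because unlike for \textsc{Independent Set} --- where the constraint is local to an edge --- the domination relation couples each terminal to the outside of the gadget as well as to its interior, so $H$ must remain sound against every possible external contribution to the domination of its terminals.

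\textbf{Step 3, assembling.} Starting from a hard instance $G$ of cutwidth $k$ from Step 1, I draw it with all vertices on a horizontal spine so that each edge becomes a half-circle, yielding polynomially many easily enumerable crossings $t$. Replacing each crossing by a copy of $H$ produces a planar graph $G'$ satisfying $\cutw(G') \leq k + \cutw(H) + 4 = k + \Oh(1)$ by the planarization bound from the introduction, and $\optDS(G') = \optDS(G) + t \cdot c_H$ by Step 2. A hypothetical $\Oh^*((3-\varepsilon)^{\cutw})$-algorithm on planar graphs, applied to $G'$, would then recover $\optDS(G)$ in time $\Oh^*((3-\varepsilon)^{k + \Oh(1)}) = \Oh^*((3-\varepsilon)^{k})$, contradicting Step 1 and therefore SETH.
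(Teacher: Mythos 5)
Your overall architecture matches the paper exactly: a cutwidth version of the Lokshtanov--Marx--Saurabh lower bound for the nonplanar case, a useful crossover gadget for \textsc{Dominating Set}, and the generic planarization framework to combine them. Steps 1 and 3 are essentially the paper's Theorem~\ref{thm:domset:nonplanar:lb} and Theorem~\ref{thm:cutwidth:planarization}. (One caveat on Step 1: the paper's ``minor tweak'' is not a re-routing of long horizontal edges through relay chains --- the matchings between consecutive columns are already local --- but a splitting of the single high-degree vertex~$h$ into two vertices~$h_1,h_2$ so that its edges to the first and last columns do not all cross the same cuts. Relay chains inserted into a \textsc{Dominating Set} instance would themselves need to be dominated and would require their own correctness argument.)

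The genuine gap is Step 2. You correctly state the specification a crossover gadget must satisfy --- planar, terminals on the outer face in the right cyclic order, constant cutwidth, and an exact additive shift $c_H$ of $\optDS$ that is robust against every external contribution to the domination of the attachment vertices --- and you correctly identify why this is hard (domination couples each terminal to the outside of the gadget). But you then only assert that a gadget built from ``forced pendants'' guarding a core that ``simulates'' one of the two crossing edges would work; no such gadget is exhibited, and no prior construction exists that you could cite. Since the existence of this gadget is precisely the theorem's technical content, the proof is incomplete at its central point. The paper's resolution is worth noting because it sidesteps the difficulty you flag rather than attacking it head-on: it first replaces each of the two crossing edges by a \emph{double-path structure} whose insertion shifts $\optDS$ by exactly~$6$ (Lemma~\ref{lemma:insert:doublepath}), turning one edge crossing into four crossings between \emph{triangles} whose apexes have degree two. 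Such triangles force a minimum dominating set to behave locally like a vertex cover (Observation~\ref{obs:domset:avoid:degtwo} and Proposition~\ref{proposition:domset:is:vc}), which reduces the remaining problem to crossing elimination for \textsc{Vertex Cover}, where the Garey--Johnson--Stockmeyer gadget (suitably folded and augmented with degree-two triangle vertices) applies and shifts $\optDS$ by exactly~$9$ per crossing (Lemma~\ref{lemma:replace:triangles}). The composite replacement shifts $\optDS$ by exactly~$48$ and yields the useful crossover gadget your argument needs. Without this (or some equally concrete) construction and its exact-shift verification, the theorem does not follow.
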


Since any linear ordering of cutwidth~$k$ can be transformed into a tree decomposition of width at most~$k$ in polynomial time (cf.~\cite[Theorem 47]{Bodlaender98}), the lower bounds of Theorems~\ref{thm:independentset:lb} and~\ref{thm:domset:planar:lb} also apply to the parameterization by treewidth. Hence our work resolves the question raised by Lokshtanov et al.~\cite{LokshtanovMS11} and by Baste and Sau~\cite{BasteS15} whether the SETH-lower bounds for \textsc{Independent Set} and \textsc{Dominating Set} parameterized by treewidth also apply for planar graphs.

\subparagraph*{Organization}
In Section~\ref{sec:definitions} we provide preliminaries.
In Section~\ref{sect:planarization} we present a general theorem for planarizing graphs of bounded cutwidth, using a crossover gadget. It leads to a proof of Theorem~\ref{thm:independentset:lb}. In Section~\ref{sec:lowds} we prove Theorem~\ref{thm:domset:planar:lb}.
Finally, we provide some conclusions in Section~\ref{sec:conc}. Due to space restrictions, proofs for statements marked ($\bigstar$) have been deferred to the appendix.

\section{Preliminaries}\label{sec:definitions} \label{sec:preliminaries}
We use~$\mathbb{N}$ to denote the natural numbers, including~$0$. For a positive integer~$n$ and a set~$X$ we use~$\binom{X}{n}$ to denote the collection of all subsets of~$X$ of size~$n$. The \emph{power set} of~$X$ is denoted~$2^{X}$. The set~$\{1, \ldots, n\}$ is abbreviated as~$[n]$. The~$\Oh^*$ notation suppresses polynomial factors in the input size~$n$, such that~$\Oh^*(f(k))$ is shorthand for~$\Oh(f(k) n^{\Oh(1)})$. All our logarithms have base two.

We consider finite, simple, and undirected graphs~$G$, consisting of a vertex set~$V(G)$ and edge set~$E(G) \subseteq \binom{V(G)}{2}$. The neighbors of a vertex~$v$ in~$G$ are denoted~$N_G(v)$. The closed neighborhood of~$v$ is~$N_G[v] := N_G(v) \cup \{v\}$. For a vertex set~$S \subseteq V(G)$ the open neighborhood is~$N_G(S) := \bigcup_{v \in S} N_G(v) \setminus S$ and the closed neighborhood is~$N_G[S] := N_G(S) \cup S$. The subgraph of~$G$ induced by a vertex subset~$U \subseteq V(G)$ is denoted~$G[U]$. The operation of \emph{identifying vertices~$u$ and~$v$} in a graph~$G$ results in the graph~$G'$ that is obtained from~$G$ by replacing the two vertices~$u$ and~$v$ by a new vertex~$w$ with~$N_{G'}(w) = N_G(\{u,v\})$.

An \emph{independent set} is a set of pairwise nonadjacent vertices. A \emph{vertex cover} in a graph~$G$ is a set~$S \subseteq V(G)$ such that~$S$ contains at least one endpoint from every edge. A set~$S \subseteq V(G)$ \emph{dominates} the vertices~$N_G[S]$. A \emph{dominating set} is a vertex set~$S$ such that~$N_G[S] = V(G)$. The associated decision problems ask, given a graph~$G$ and integer~$t$, whether an independent set (dominating set) of size~$t$ exists in~$G$. The size of a maximum independent set (resp.\, minimum dominating set) in~$G$ is denoted~$\optIS(G)$ (resp.~$\optDS(G)$). The \qSAT problem asks whether a given Boolean formula, in conjunctive normal form with clauses of size at most~$q$, has a satisfying assignment.

\begin{seth}[\cite{ImpagliazzoP01,ImpagliazzoPZ01}]
For every $\varepsilon > 0$, there is a constant~$q$ such that \qSAT on $n$ variables cannot be solved in time $\Oh^*((2-\varepsilon)^n)$.
\end{seth}

\subparagraph*{Drawings} 
A \emph{drawing} of a graph~$G$ is a function~$\psi$ that assigns a unique point~$\psi(v) \in \mathbb{R}^2$ to each vertex~$v \in V(G)$, and a curve~$\psi(e) \subseteq \mathbb{R}^2$ to each edge~$e \in E(G)$, such that the following four conditions hold. (1) For~$e = \{u,v\} \in E(G)$, the endpoints of~$\psi(e)$ are exactly~$\psi(u)$ and~$\psi(v)$. (2) The interior of a curve~$\psi(e)$ does not contain the image of any vertex. (3) No three curves representing edges intersect in a common point, except possibly at their endpoints. (4) The interiors of the curves~$\psi(e), \psi(e')$ for distinct edges intersect in at most one point. If the interiors of all the curves representing edges are pairwise-disjoint, then we have a \emph{planar drawing}. In this paper we combine (nonplanar) drawings with crossover gadgets to build planar drawings. A graph is \emph{planar} if it admits a planar drawing.

\subparagraph*{Cutwidth} 
For an $n$-vertex graph~$G$, a \emph{linear layout} of~$G$ is a linear ordering of its vertex set, as given by a bijection~$\pi \colon V(G) \to [n]$. The \emph{cutwidth} of~$G$ with respect to the layout~$\pi$ is:
$$\cutw_\pi(G) = \max_{1 \leq i < n} \bigl | \bigl \{ \{u,v\} \in E(G) \bigmid \pi(u) \leq i \wedge \pi(v) > i \bigr\} \bigr |.$$
The cutwidth~$\cutw(G)$ of a graph~$G$ is the minimum cutwidth attained by any linear layout. It is well-known that~$\cutw(G) \geq \pw(G) \geq \tw(G)$, where the latter denote the pathwidth and treewidth of~$G$, respectively (cf.~\cite{Bodlaender98}).

\section{Planarizing graphs while preserving cutwidth} \label{sect:planarization}
In this section we show how to planarize a graph without blowing up its cutwidth. An intuitive way to think about cutwidth is to consider the vertices as being placed on a horizontal line in the order dictated by the layout~$\pi$, with edges drawn as $x$-monotone curves. For any position~$i$ we consider the gap between vertex~$\pi^{-1}(i)$ and~$\pi^{-1}(i+1)$, and count the edges that \emph{cross} the gap by having one endpoint at position at most~$i$ and the other at position after~$i$. The cutwidth of a layout is the maximum number of edges crossing any single gap; see Figure~\ref{fig:linearlayout}. The simple but useful fact on which our approach hinges is the following. If we obtain~$G'$ by replacing a crossing in the drawing by a new vertex of degree four, and we let~$\pi'$ be the left-to-right order of the vertices in the resulting drawing, then~$\cutw_\pi(G) = \cutw_{\pi'}(G')$. Hence by repeating this procedure we can eliminate all crossings to obtain a planarized version of~$G$ without increasing the cutwidth. To utilize this idea in reductions, we formalize a version of this approach where we planarize the graph by inserting gadgets, rather than simply replacing crossings by degree-four vertices.

\begin{figure}[t]
\centering
\includegraphics{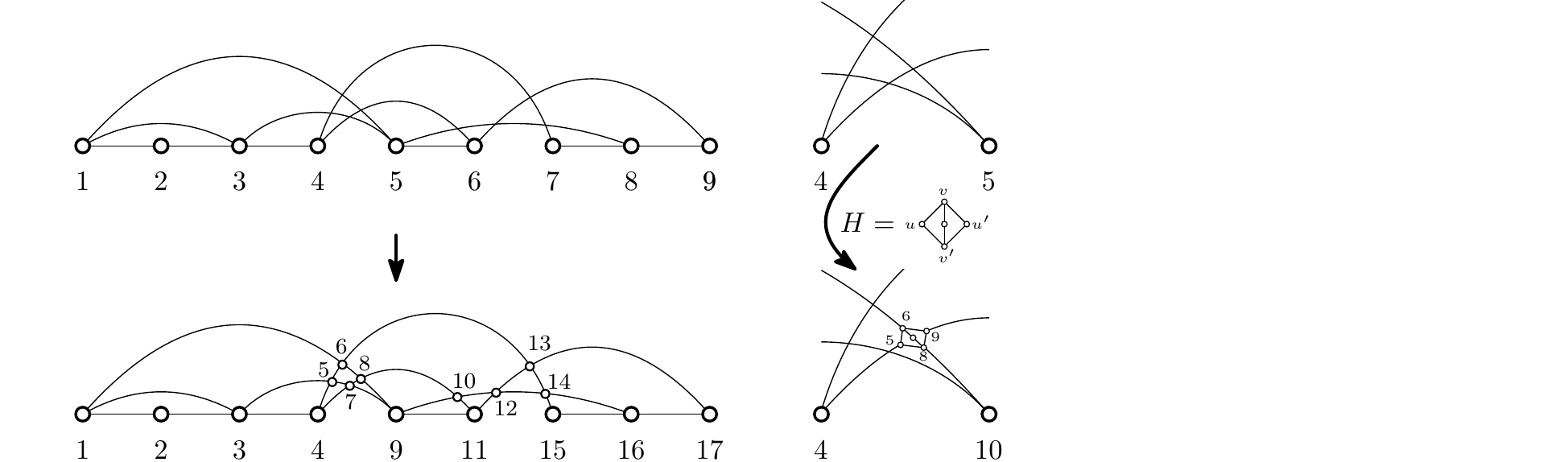}
\caption{Top-left: a linear layout~$\pi$ of a graph~$G$ with~$\cutw_\pi(G) = 4$. The largest cutsize is attained after vertices~$4$ and~$5$. Bottom-left: after inserting vertices at the crossings to obtain~$G'$ and extending~$\pi$ to~$\pi'$ based on the $x$-coordinates of the inserted vertices, we have~$\cutw_\pi(G) = \cutw_{\pi'}(G')$. Top-right: enlarged view. Bottom-right: replacing a crossing by gadget~$H$.}
\label{fig:linearlayout}
\end{figure}

\begin{definition} \label{def:crossover:gadget}
A \emph{crossover gadget} is a graph~$H$ with terminal vertices~$u,u',v,v'$ such that:
\begin{enumerate}
	\item there is a planar drawing~$\psi$ of~$H$ in which all terminals lie on the outer face, and
	\item there is a closed curve intersecting the drawing~$\psi$ only in the terminals, which visits the terminals in the order~$u,v,u',v'$.
\end{enumerate}
\end{definition}

\begin{definition} \label{def:replace}
Let~$\{a,b\}$ and~$\{c,d\}$ be disjoint edges of a graph~$G$, and let~$H$ be a crossover gadget. The operation of \emph{replacing~$\{ \{a,b\}, \{c,d\}\}$ by~$H$} removes edges~$\{a,b\}$ and~$\{c,d\}$, inserts a new copy of the graph~$H$, and inserts the edges~$\{a,u\}, \{u',b\}, \{c,v\}, \{v',d\}$.
\end{definition}

\noindent For a crossover gadget to be useful to planarize instances of a decision problem, a replacement should have a predictable effect on the answer. To formalize this, we say that a \emph{decision problem~$\Pi$ on graphs} is a decision problem whose input consists of a graph~$G$ and integer~$t$.

\begin{definition} \label{def:useful:crossover}
A crossover gadget~$H$ is \emph{useful for a decision problem~$\Pi$ on graphs} if there exists an integer~$c_\Pi$ such that the following holds. If~$(G,t)$ is an instance of~$\Pi$ containing disjoint edges~$\{a,b\}, \{c,d\}$, and~$G'$ is the result of replacing these edges by~$H$, then~$(G,t)$ is a \yes-instance of~$\Pi$ if and only if~$(G', t + c_\Pi)$ is a \yes-instance of~$\Pi$.
\end{definition}

The following theorem proves that a useful crossover gadget can be used to efficiently planarize instances without blowing up their cutwidth.

\begin{theorem} \label{thm:cutwidth:planarization}
If~$H$ is a crossover gadget that is useful for decision problem~$\Pi$ on graphs, then there is a polynomial-time algorithm that, given an instance~$(G,t)$ of~$\Pi$ and a linear layout~$\pi$ of~$G$, outputs an instance~$(G',t')$ and a linear layout~$\pi'$ of~$G'$ such that:
\begin{enumerate}
	\item $G'$ is planar,
	\item $\cutw_{\pi'}(G') \leq \cutw_{\pi}(G) + \cutw(H) + 4$, and
	\item $(G,t)$ is a \yes-instance of~$\Pi$ if and only if~$(G', t')$ is a \yes-instance of~$\Pi$.
\end{enumerate}
\end{theorem}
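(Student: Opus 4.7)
The plan is to realize $\pi$ as an $x$-monotone drawing of $G$ in the plane, enumerate its crossings, and iteratively replace each crossing by a fresh copy of $H$ using Definition~\ref{def:replace}. First, I would place each $v \in V(G)$ at $x$-coordinate $\pi(v)$ on the horizontal axis and route each edge as an $x$-monotone curve above or below that axis; by the definition of $\cutw_\pi(G)$, at most $\cutw_\pi(G)$ edge-curves meet any non-vertex vertical line. A mild perturbation guarantees that no two crossings share an $x$-coordinate. This drawing has $O(|E(G)|^2)$ crossings and can be produced in polynomial time.

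Next, I would process the crossings one by one. For each crossing between edges $\{a,b\}$ and $\{c,d\}$, apply the replacement of Definition~\ref{def:replace} with a fresh copy of $H$, attaching terminals $u,u'$ to the two stubs along $\{a,b\}$ and $v,v'$ to the two stubs along $\{c,d\}$. The cyclic order $u,v,u',v'$ on the outer face guaranteed by Definition~\ref{def:crossover:gadget} matches the cyclic order in which the four half-edges leave a small disk around the crossing point, so a shrunken copy of the planar drawing of $H$ can be embedded inside this disk and the four bridging edges reconnected without introducing new crossings. After processing every crossing, the resulting $G'$ is planar.

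To specify $\pi'$, fix an optimal linear layout $\sigma_H$ of $H$ with cutwidth $\cutw(H)$, keep the original $\pi$-order on $V(G)$, and insert the vertices of each gadget copy contiguously, in $\sigma_H$-order, at the $x$-position of its crossing. To bound $\cutw_{\pi'}(G')$, I would case-analyse a gap of $\pi'$. If the gap lies outside every gadget, then each original edge of $G$ straddling that $x$-coordinate contributes exactly one segment (the piece between the two flanking gadgets it passes through), giving at most $\cutw_\pi(G)$ edges across the cut. If the gap lies between two consecutive $\pi'$-positions inside the gadget at crossing $X_i$, the crossing edges of $G'$ split into three groups: external edges of $G$ passing over $X_i$ but not participating in it (at most $\cutw_\pi(G)-2$), bridging edges of this gadget (at most four), and internal edges of $H$ crossing the corresponding gap of $\sigma_H$ (at most $\cutw(H)$). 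Summing yields the claimed bound with room to spare.

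Finally I would set $t' := t + m \cdot c_\Pi$, where $m$ is the number of crossings replaced; iterating Definition~\ref{def:useful:crossover} over these $m$ replacements gives the equivalence of instances. The main obstacle is the second step: using the cyclic-order condition of Definition~\ref{def:crossover:gadget} to argue that the planar drawing of $H$ can be spliced into a disk around each crossing in exactly the orientation needed to eliminate it without disturbing the rest of the drawing, so that iterating over all crossings indeed yields a planar $G'$. Once this geometric step is pinned down, the remaining cutwidth accounting is a clean case analysis on the location of the gap.
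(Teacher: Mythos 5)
Your proposal follows essentially the same route as the paper's proof: the same $x$-monotone drawing realizing $\pi$, the same crossing-by-crossing replacement justified by the cyclic terminal order of Definition~\ref{def:crossover:gadget}, the same block layout $\pi'$ with each gadget copy inserted contiguously in an optimal order of $H$, and the same two-case cutwidth accounting (your count of $\cutw_\pi(G)-2$ passing edges plus four bridging edges is in fact slightly tighter than the paper's). The argument is correct as outlined.
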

\begin{proof}
Consider the crossover gadget~$H$ for~$\Pi$ with terminals~$u,u,v,v'$. Let~$\pi_H$ be a linear layout of~$H$ of minimum cutwidth, which is hardcoded into the algorithm along with the integer~$c_\Pi$ described in Definition~\ref{def:useful:crossover}. 

Let~$(G,t)$ be an instance of~$\Pi$ with a linear layout~$\pi$. We start by constructing a (nonplanar) drawing~$\psi$ of~$G$ with the following properties.
\begin{enumerate}
	\item The vertices of~$G$ are placed on the $x$-axis, in the order dictated by~$\pi$.\label{pty:order:pi}
	\item The image of each edge of~$G$ is a strictly $x$-monotone curve.\label{pty:monotone:curves}
	\item If the drawings of two edges intersect in their interior, then their endpoints are all distinct and the corresponding curves properly cross; they do not only touch.\label{pty:cross}
	\item For each pair of edges, their drawings intersect in at most one point.\label{pty:intersect:once}
	\item The $x$-coordinates of all crossings are distinct from each other, and from the $x$-coordinates of the vertices. \label{pty:xorder}
\end{enumerate}
It is easy to see that such a drawing always exists and can be found in polynomial time; we omit the details as they are not interesting. Properties~\ref{pty:order:pi} and~\ref{pty:monotone:curves} together ensure that for any~$i \in [|V(G)|-1]$, the set of edges that cross the gap after vertex~$\pi^{-1}(i)$ in the linear layout is exactly the set of edges intersected by a vertical line between~$\pi^{-1}(i)$ and~$\pi^{-1}(i+1)$, which therefore has size at most~$\cutw_{\pi}(G)$. We will use this property later.

The algorithm replaces the crossings one by one. If two edges~$\{a,b\}$ and~$\{c,d\}$ intersect in their interior, then their endpoints are all distinct by~(\ref{pty:cross}) and they properly cross. Hence we can replace these two edges by a copy of~$H$ as in Definition~\ref{def:replace}. Since there is a planar drawing of~$H$ with the terminals alternating along the outer face, after possibly swapping the labels of~$a$ and~$b$, and of~$c$ and~$d$, the drawing can be updated so that the crossing between~$\{a,b\}$ and~$\{c,d\}$ is eliminated. Since each of~$a,b,c,d$ is made adjacent to exactly one vertex of~$H$, the replacement can be done such that the remaining crossings are in exactly the same locations as before; see the right side of Figure~\ref{fig:linearlayout}. When inserting the crossover gadget, we scale it down sufficiently far that the following holds: all vertices and crossings that were originally on the left of the crossing between~$\{a,b\}$ and~$\{c,d\}$ lie to the left of all vertices that are inserted to replace this crossing; and all vertices and crossings that were on the right of the~$\{\{a,b\},\{c,d\}\}$ crossing, lie to the right of all vertices inserted for its replacement.

Since each pair of edges intersects at most once by~(\ref{pty:intersect:once}), the number of crossings is~$\Oh(|V(G)|^2)$. Hence in polynomial time we can replace all crossings by copies of~$H$ to arrive at a graph~$G'$. If~$\ell$ is the number of replaced crossings, then we set~$t' := t + \ell \cdot c_\Pi$. By Definition~\ref{def:useful:crossover} and transitivity it follows that~$(G,t)$ is a \yes-instance of~$\Pi$ if and only if~$(G',t')$ is a \yes-instance. By construction,~$G'$ is planar. It remains to define a linear layout of~$G'$ and bound its cutwidth.

The layout~$\pi'$ of~$G'$ is defined as follows. Let the \emph{elements} of the original drawing~$\psi$ of~$G$ consist of its vertices and its crossings. The elements of~$\psi$ are linearly ordered by their $x$-coordinates, by~(\ref{pty:xorder}). The linear layout~$\pi'$ of~$G'$ has one block per element of~$\psi$, and these blocks are ordered according to the $x$-coordinates of the corresponding element. For elements that consist of a vertex~$v$, the block simply consists of~$v$. For elements that consist of a crossing~$X$, the block consists of the vertices of the copy of~$H$ that was inserted to replace~$X$, in the order dictated by~$\pi_H$. It is easy to see that~$\pi'$ can be constructed in polynomial time.

We classify the edges of~$G'$ into two types. We have \emph{internal} edges, which are edges within an inserted copy of~$H$, and we have \emph{external edges} which connect two different copies of~$H$, or which connect a vertex of~$V(G) \cap V(G')$ to a copy of~$H$. Using this classification we argue that for an arbitrary vertex~$v^*$ of~$G'$, the cut crossing the gap after vertex~$v^*$ in~$\pi'$ contains at most~$\cutw_{\pi}(G) + \cutw(H) + 4$ edges. To do so, we distinguish two cases depending on whether~$v^*$ is an original vertex from~$G$, or was inserted as part of a copy of~$H$.

\begin{claim}
If~$v^* \in V(G) \cap V(G')$, then the size of the cut after~$v^*$ in~$\pi'$ is at most~$\cutw_{\pi}(G)$.
\end{claim}
\begin{claimproof}
The layout~$\pi'$ consists of blocks, and~$v^* \in V(G) \cap V(G')$ is a block. So for each copy~$C$ of a crossover gadget, the vertices of~$C$ all appear on the same side of~$v^*$ in the ordering. Hence no internal edge of~$C$ crosses the cut after~$v^*$, implying that no internal edge is in the cut. Each external edge crossing the cut is (a segment of) an edge of~$G$ that is intersected by a vertical line after~$v^*$ in the drawing~$\psi$; see Figure~\ref{fig:linearlayout}. As such a line intersects at most~$\cutw_{\pi}(G)$ edges as observed above, the cut after~$v^*$ has size at most~$\cutw_{\pi}(G)$.
\end{claimproof}

\begin{claim}
If~$v^* \in V(G')$ is a vertex of a copy~$C$ of a crossover gadget that was inserted to replace a crossing~$X$, then the size of the cut after~$v^*$ in~$\pi'$ is at most~$\cutw_\pi(G) + \cutw(H) + 4$.
\end{claim}
\begin{claimproof}
The number of internal edges in the cut after~$v^*$ is at most~$\cutw(H)$, since the only internal edges in the cut all belong to the same copy~$C$ that contains~$v^*$ and we ordered them according to an optimal layout~$\pi_H$. There are at most four external edges incident on a vertex of~$C$, which contribute at most four to the cut. Finally, for each of the remaining external edges in the cut there is a unique edge of~$G$ intersected by a vertical line through crossing~$X$ in the drawing~$\psi$. As at most~$\cutw_{\pi}(G)$ edges are intersected by any vertical line, as observed above, it follows that the size of the cut is at most~$\cutw_\pi(G) + \cutw(H) + 4$.
\end{claimproof}

The two claims together show that any gap in the ordering~$\pi'$ is crossed by at most~$\cutw_\pi(G) + \cutw(H) + 4$ edges, which bounds the cutwidth of~$G'$ as required.
\end{proof}

Using Theorem~\ref{thm:cutwidth:planarization} we can now elegantly prove Theorem~\ref{thm:independentset:lb} by combining two known results. 

{
\renewcommand{\thetheorem}{\ref{thm:independentset:lb}}
\begin{theorem}
\indsetthm
\end{theorem}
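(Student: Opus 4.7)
The plan is to combine three ingredients that are already in place: the SETH-based lower bound of Lokshtanov, Marx, and Saurabh~\cite{LokshtanovMS11} for \IndependentSet parameterized by pathwidth, the classical planar crossover gadget of Garey, Johnson, and Stockmeyer~\cite{GareyJS76}, and the planarization machinery of Theorem~\ref{thm:cutwidth:planarization} just proved.

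First, I would promote the Lokshtanov--Marx--Saurabh lower bound from pathwidth to cutwidth on general graphs. Their reduction from \qSAT produces a graph together with an explicit linear arrangement; a direct inspection shows that the cutwidth of this arrangement is bounded by the same quantity as its pathwidth up to an additive constant, because the gadgets they use have bounded size and are connected across the layout by a bounded number of ``backbone'' edges per cut. This step yields the intermediate statement that, assuming SETH, \IndependentSet on a general graph given together with a linear layout of cutwidth~$k$ cannot be solved in time~$\Oh^*((2-\varepsilon)^k)$.

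Second, I would verify that the Garey--Johnson--Stockmeyer gadget~$H$ meets the two requirements of Definition~\ref{def:crossover:gadget} and is useful for \IndependentSet in the sense of Definition~\ref{def:useful:crossover}. The standard drawing of~$H$ places the four terminals on the outer face in the alternating cyclic order $u,v,u',v'$, so both planarity conditions hold. For usefulness one fixes a constant~$c_H$ equal to the maximum number of \emph{internal} vertices of~$H$ that can be added to any fixed choice on the four terminals to form an independent set of~$H$; the classical analysis of the gadget shows that this number is the same for every one of the sixteen terminal patterns, which is exactly what Definition~\ref{def:useful:crossover} needs to ensure that $\optIS(G') = \optIS(G) + c_H$ after a single replacement.

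Third, I would feed these two ingredients into Theorem~\ref{thm:cutwidth:planarization}. Starting from any instance $(G,t)$ with a linear layout $\pi$ of cutwidth $k$, the theorem produces in polynomial time an equivalent planar instance $(G',t')$ with a linear layout $\pi'$ of cutwidth at most $k + \cutw(H) + 4 = k + \Oh(1)$, since $H$ is fixed. A hypothetical $\Oh^*((2-\varepsilon)^{\cutw})$-time algorithm for planar \IndependentSet would decide $(G,t)$ in time $\Oh^*((2-\varepsilon)^{k+\Oh(1)}) = \Oh^*((2-\varepsilon)^k)$, contradicting the intermediate lower bound and hence SETH.

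The main obstacle I expect is the bookkeeping in the second step: Definition~\ref{def:useful:crossover} insists that the additive offset $c_H$ be \emph{the same} independently of which subset of the terminals $\{a,b,c,d\}$ is ultimately placed into the solution, so one must check the Garey--Johnson--Stockmeyer gadget against all sixteen terminal patterns rather than only the ``canonical'' ones. Once this uniformity is confirmed, everything else --- computing $\cutw(H)$ on a fixed finite graph, and the final chain of equivalences --- is routine and handled by Theorem~\ref{thm:cutwidth:planarization}.
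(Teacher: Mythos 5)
Your three-step architecture --- lift the Lokshtanov--Marx--Saurabh bound from pathwidth to cutwidth on general graphs, certify the Garey--Johnson--Stockmeyer gadget as a useful crossover gadget for \IndependentSet, and then invoke Theorem~\ref{thm:cutwidth:planarization} --- is exactly the route the paper takes (the first step is carried out as Theorem~\ref{thm:ind:set:cutwidth:lb} in the appendix, and the second is handled by complementing the known \VertexCover analysis, with offset~$c_\Pi = 9$). Steps one and three are fine as sketched; the appendix proof of step one is essentially your ``one backbone edge per path per cut'' argument, made precise by interleaving the path vertices with the clause gadgets.

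The genuine problem is the property you propose to verify in step two. You want to show that the maximum number of \emph{internal} vertices of~$H$ that can be added to a fixed selection of terminals is the same for all sixteen terminal patterns, and you present this uniformity as ``exactly what Definition~\ref{def:useful:crossover} needs.'' That property is false, and it must be false for any gadget that works: if the gadget's contribution were independent of which terminals are selected, then replacing the crossing edges~$\{a,b\}$ and~$\{c,d\}$ by~$H$ would be equivalent to simply deleting those two edges, giving~$\optIS(G') = \optIS(G - \{a,b\} - \{c,d\}) + c_H$ rather than~$\optIS(G) + c_H$. The whole point of the gadget is the \emph{asymmetry} across terminal patterns: a pattern in which both~$u$ and~$u'$ are excluded (because both~$a$ and~$b$ are taken outside the gadget) must admit strictly fewer internal vertices, so that an optimal solution of~$G'$ is never rewarded for taking both endpoints of a replaced edge. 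This is precisely the shape of Proposition~\ref{prop:axes:hvc} for the simplified gadget used in the \DominatingSet construction: the required number of non-terminals \emph{increases} with the number of axes from which no terminal is used. So the verification you plan would fail if attempted; what you actually need is the two-sided, case-dependent bound establishing~$\optIS(G') = \optIS(G) + c_H$ (or a citation to it, as the paper gives via the \VertexCover complement). Once step two is repaired in this way, the rest of your argument goes through.
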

\addtocounter{theorem}{-1}
}
\begin{proof}
First, we observe that the crossover gadget for \textsc{Vertex Cover} due to Garey, Johnson, and Stockmeyer~\cite[Thm. 2.7]{GareyJS76} satisfies our conditions of a useful crossover gadget. Since an $n$-vertex graph has a vertex cover of size~$k$ if and only if it has an independent set of size~$n-k$, it also acts as a useful crossover gadget for \textsc{Independent Set} with~$c_\pi = 9$ (cf.~\cite[Proposition 20]{JansenW16}). Second, we observe that by a different analysis of a construction due to Lokshtanov et al.~\cite{LokshtanovMS11}, it follows that (assuming SETH) there is no~$\varepsilon > 0$ such that \textsc{Independent Set} on a graph~$G$ with a linear layout of cutwidth~$k$ can be solved in time~$\Oh^*((2-\varepsilon)^k)$. We prove this in Theorem~\ref{thm:ind:set:cutwidth:lb} in the appendix. By Theorem~\ref{thm:cutwidth:planarization}, if such a runtime could be achieved on \emph{planar} graphs of a given cutwidth, it could be achieved for a general graph as well, since the insertion of crossover gadgets increases the cutwidth by only a constant. Hence Theorem~\ref{thm:independentset:lb} follows.
\end{proof}

\section{Lower bound for dominating set on planar graphs of bounded cutwidth}\label{sec:lowds}

In this section we prove a runtime lower bound for solving \textsc{Dominating Set} on planar graphs of bounded cutwidth. Our starting point is the insight that through a minor modification, the lower bound by Lokshtanov et al.~\cite{LokshtanovMS11} for the parameterization by pathwidth can be lifted to apply to the parameterization by cutwidth as well.

\newcommand{\nonplanardomsetthm}{Assuming SETH, there is no~$\varepsilon > 0$ such that \textsc{Dominating Set} on a (nonplanar) graph~$G$ given along with a linear layout of cutwidth~$k$ can be solved in time~$\Oh^*((3-\varepsilon)^k)$.}
\begin{theorem}[$\bigstar$] \label{thm:domset:nonplanar:lb}
\nonplanardomsetthm
\end{theorem}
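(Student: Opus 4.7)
The plan is to revisit the SAT-to-Dominating Set reduction of Lokshtanov, Marx, and Saurabh~\cite{LokshtanovMS11}, originally designed for the pathwidth parameterization, and re-analyze (with a minor modification) the resulting graph with respect to cutwidth. Given an instance $\phi$ of \qSAT on $n$ variables, I would partition the variables into $n' = \lceil n / \log_2 3 \rceil$ blocks of at most $\log_2 3$ variables, so that each block encodes one ternary digit of the assignment. The reduction lays gadgets out in a grid: one \emph{row} per block and one \emph{column} per clause (together with a constant number of spacer columns). Propagation gadgets along each row carry the ternary state representing the assignment to that block, and the column corresponding to clause $C$ verifies that the currently encoded assignment satisfies $C$. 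The output is a graph $G$ and an integer $t$ such that $G$ has a dominating set of size at most $t$ if and only if $\phi$ is satisfiable.

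The main step is to exhibit a linear layout $\pi$ of $V(G)$ with $\cutw_{\pi}(G) \leq n' + c$ for a constant $c$ depending only on $q$. I would order the columns from left to right and, inside each column, order the rows from top to bottom, keeping the constant-size portion of a column belonging to a single row contiguous in $\pi$. Between two consecutive columns the cut consists of exactly the propagation edges, one bundle of constant size per row, contributing $\Oh(n')$ edges. For a gap placed inside a column, the cut comprises (i)~propagation edges of rows whose column transition has not yet been made, (ii)~propagation edges of rows whose transition is already complete, and (iii)~internal edges of the clause gadget that straddle the gap.

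The \emph{minor modification} is needed to ensure that no internal vertex of a clause gadget is adjacent to vertices from many rows at once. In the original construction the clause check may involve a single ``selector'' vertex adjacent to one vertex in each row, which concentrates $\Theta(n')$ edges at a single position of any row-by-row ordering and ruins the cutwidth bound. I would replace such a selector by a chain of $n'$ low-degree vertices that propagates the clause-satisfaction signal down the column one row at a time, analogous to the row propagation gadgets; together with appropriate forcing gadgets this chain still enforces the intended domination semantics. With this change the contribution of (iii) to any internal-column gap becomes $\Oh(1)$, while (i) and (ii) together contribute $\Oh(n')$, yielding $\cutw_{\pi}(G) \leq n' + \Oh(1)$.

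The SETH reduction then closes the argument: an algorithm solving \DominatingSet in time $\Oh^*((3-\varepsilon)^k)$ on graphs of cutwidth $k$ would solve $(G,t)$ in time $\Oh^*((3-\varepsilon)^{n/\log_2 3}) = \Oh^*((2-\delta)^n)$ for some $\delta > 0$ depending only on $\varepsilon$, since $3^{1/\log_2 3} = 2$; taking $q$ as guaranteed by SETH for this $\delta$ yields a contradiction. The main obstacle I anticipate is not the high-level strategy but the gadget-level case analysis verifying that the tweaked selector chain preserves the exact correspondence between minimum dominating sets and satisfying assignments in the Lokshtanov--Marx--Saurabh construction, since those gadgets were optimized for pathwidth rather than cutwidth and any rearrangement must be checked to respect every forced choice in the ``behaviour table'' of the clause gadget.
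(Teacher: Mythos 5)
Your high-level strategy matches the paper's: take the Lokshtanov--Marx--Saurabh \textsc{Dominating Set} construction for pathwidth and re-analyze a column-by-column, row-by-row layout for cutwidth, with a small modification to prevent edges from concentrating at one position. However, there are two genuine gaps. First, the variable grouping as you state it is not realizable: blocks ``of at most $\log_2 3$ variables'' contain at most one variable each, so $\lceil n/\log_2 3\rceil$ such blocks cannot partition $n$ variables, and one ternary wire per single variable would give cutwidth $\approx n$, for which $\Oh^*((3-\varepsilon)^n)$ yields no contradiction with SETH. You need the standard trick of the original construction: fix a large constant $p$ (depending on $\varepsilon$), group $\lfloor p\log_2 3\rfloor$ variables per block, and represent each block by $p$ parallel wires with three states each, so that the ratio of wires to variables tends to $1/\log_2 3$ as $p$ grows; the $\Oh(3^p)$-size gadgets and the rounding loss are then absorbed into additive constants and into the choice of $\delta$.

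Second, and more importantly, your ``minor modification'' targets the wrong part of the construction. In the actual gadget, each clause vertex is adjacent to at most $q$ group gadgets (one per literal), so its degree is bounded by a function of $p$ and $q$ alone and it contributes only $\Oh(1)$ to any cut --- no selector chain is needed, and introducing one would force you to re-verify the entire behaviour table of the clause gadget, as you yourself note. The real obstruction is the global guard vertex $h$ (with its pendant neighbour $h'$), which is adjacent to all $tp$ entry vertices of the \emph{first} column and all $tp$ exit vertices of the \emph{last} column. In any left-to-right layout this vertex alone adds $tp$ extra edges to every cut, giving cutwidth $2tp+\Oh(1)$ rather than $tp+\Oh(1)$, which ruins the base of the exponent ($(3-\varepsilon)^{2n/\log_2 3}$ exceeds $2^n$). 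The paper's fix is to split $h$ into two vertices $h_1,h_2$, each with its own pendant, placed at the two ends of the layout; correctness is immediate because the pendants force both copies into a minimum dominating set and together they dominate exactly what $h$ did, so $\optDS$ increases by exactly one. Your proposal, by not addressing $h$, does not actually achieve the claimed bound $\cutw_\pi(G)\leq n'+\Oh(1)$.
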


\begin{figure}[t]
	\centering
	\includegraphics{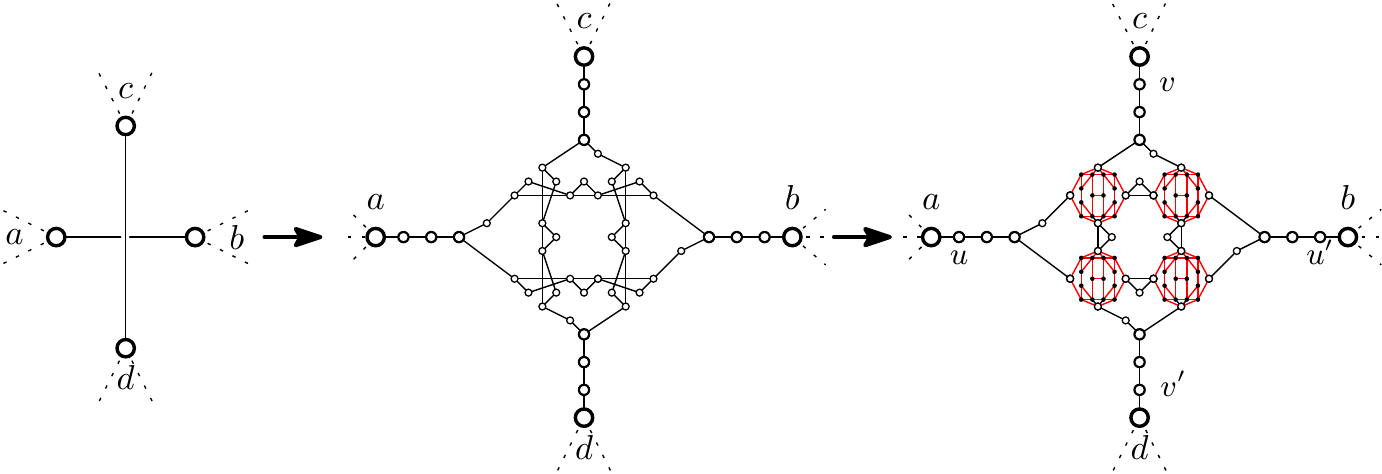}
	\caption{Overview of the method to eliminate an edge crossing in an instance of \textsc{Dominating Set}. Each edge is transformed into a double-path structure, to turn a single crossing edge into four crossing triangles (middle figure). Then each crossing triangle is replaced by a planar gadget (right figure). Some vertices have been omitted for readability. For each red edge~$\{u,v\}$, there is a hidden degree-two vertex in the graph that forms a triangle with~$u$ and~$v$.}
	\label{fig:dominatingset:planarization}
\end{figure}

Our contribution is to extend the lower bound of Theorem~\ref{thm:domset:nonplanar:lb} to apply to \emph{planar} graphs. Following the strategy outlined in Section~\ref{sect:planarization}, to achieve this it suffices to develop a useful crossover gadget for \textsc{Dominating Set} as per Definition~\ref{def:useful:crossover}. Since our crossover gadget is fairly complicated (it has more than 100 vertices), we describe its design in steps. The main idea is as follows. We first show that an edge in a \textsc{Dominating Set} instance can be replaced by a longer double-path structure, which contains several triangles. Then we show that when two triangles cross, we can replace their crossing by a suitable adaptation of the \textsc{Vertex Cover} crossover gadget due to Garey, Johnson, and Stockmeyer~\cite[Thm. 2.7]{GareyJS76}. This two-step approach is illustrated in Figure~\ref{fig:dominatingset:planarization}. We follow the same two steps in proving its correctness, starting with the insertion of the double-path structure.

\newcommand{\doublepathlemma}{Let~$\{x,y\}$ be an edge in a graph~$G$. If~$G'$ is obtained from~$G$ by replacing $\{x,y\}$ by a double-path structure as shown in Figure~\ref{fig:dominatingset:gadget}, then $\optDS(G') = \optDS(G) + 6$.}
\begin{lemma} \label{lemma:insert:doublepath}
\doublepathlemma
\end{lemma}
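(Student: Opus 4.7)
The plan is to prove the equality by establishing the two inequalities separately, using a standard ``enumerate local dominators, then decouple'' strategy adapted to the double-path structure of Figure~\ref{fig:dominatingset:gadget}. Write $H_{xy}$ for the replaced region (the double-path together with the hidden triangle-vertices on each red edge), and let $I_{xy} := V(H_{xy}) \setminus \{x,y\}$ be its interior.

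For the upper bound $\optDS(G') \leq \optDS(G) + 6$, the plan is to exhibit an explicit canonical set $S \subseteq I_{xy}$ of size exactly $6$ that dominates every vertex of $I_{xy}$ and additionally dominates both $x$ and $y$ from inside $H_{xy}$. By picking one vertex from each of the three consecutive ``rungs'' of each of the two parallel paths (six rungs in total), one verifies directly that every hidden degree-two triangle-vertex is dominated (its two neighbors are consecutive path-vertices, one of which lies in $S$), every path vertex is dominated (its neighbor on the same rung is in $S$), and $x, y$ each have a neighbor in $S$ among the adjacent rung. Given an optimum dominating set $D$ of $G$, the set $D \cup S$ then dominates $G'$, which gives the desired bound.

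For the lower bound $\optDS(G') \geq \optDS(G) + 6$, let $D'$ be a minimum dominating set of $G'$. The heart of the argument has two steps. First, one argues that $|D' \cap V(H_{xy})| \geq 6$: the gadget decomposes into six vertex-disjoint ``triangle units'' (each a red edge together with its hidden degree-two vertex), the hidden vertex of each unit is adjacent only to the two endpoints of its red edge and hence requires a dominator inside its closed neighborhood, and these closed neighborhoods are pairwise disjoint except possibly at $x$ and $y$. A careful accounting---treating $x$ and $y$ as ``shared'' between at most one unit on each side---then forces at least six distinct dominators inside $V(H_{xy})$, with at most those two possibly being $x$ or $y$ themselves. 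Second, the plan is an exchange argument to produce from $D'$ a dominating set $D$ of $G$ with $|D| \leq |D'| - 6$. Set $D := (D' \setminus I_{xy}) \cup X$, where $X \subseteq \{x,y\}$ is added only when necessary to re-dominate $x$ or $y$ in $G$; the key observation is that whenever some vertex of $I_{xy}$ is used in $D'$ solely to dominate $x$ (respectively $y$), we can replace it in a cost-neutral way by $x$ itself (respectively $y$), because in $G$ the edge $\{x,y\}$ is available and so putting $x$ into $D$ also handles $y$ if needed.

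The main obstacle will be the bookkeeping at the interface $\{x,y\}$: the gadget gives several ways in which $x$ can be dominated from $I_{xy}$, and naive deletion of the $6$ interior dominators can leave $x$ or $y$ uncovered in $G$. The plan is to handle this through a small case distinction on which of $x,y$ lie in $D'$ and on which ``end-rungs'' contain a vertex of $D'$, in each case either finding the needed re-cover already present in $D' \setminus I_{xy}$ or absorbing the cost by adding $x$ (or $y$) to $D$ while simultaneously identifying at least one redundant interior vertex of $D'$ that the lower bound of $6$ did not already account for. Combining the two directions yields $\optDS(G') = \optDS(G) + 6$.
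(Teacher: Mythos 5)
Your upper-bound step rests on a canonical six-vertex set $S$ inside the gadget that dominates the entire interior \emph{and} both $x$ and $y$. No such set exists, and this is not a minor bookkeeping issue but a structural property of the gadget: $a_x$ is the only interior vertex adjacent to $x$ and $a_y$ the only one adjacent to $y$, so your $S$ would have to contain both $a_x$ and $a_y$; but the interior contains six vertices ($b_x, b_y, t_x, t_y, t''_x, t''_y$) whose closed neighborhoods are pairwise disjoint, lie entirely inside the interior, and avoid $a_x$ and $a_y$ (indeed the paper's inner claim proves outright that a six-vertex interior set dominating the interior cannot contain both $a_x$ and $a_y$). So $D \cup S$ need not dominate $G'$ when, say, $y \in D$ and $x$ was dominated in $G$ only via the edge $\{x,y\}$. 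The fix is a case analysis on $D \cap \{x,y\}$: when $D$ contains neither or both of $x,y$, the edge was never needed and the symmetric set $\{b_x,b_y,e_x,e_y,g_x,g_y\}$ suffices; when $D \cap \{x,y\}=\{x\}$, one uses the asymmetric set $\{c_x,f_x,h_x,e_y,g_y,a_y\}$, which contains $a_y$ to take over dominating $y$ while leaving $a_x$ to be dominated by $x$ itself.

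Your lower bound has the right skeleton (six pairwise-disjoint private closed neighborhoods force six interior dominators, then decouple), but two things need repair. First, your disjoint units are the red-edge triangles, whose closed neighborhoods meet $x$ and $y$ at the end rungs and also overlap one another along the paths ($N[t'_x]$ meets both $N[t_x]$ and $N[t''_x]$); you only get $|D' \cap V(H_{xy})| \geq 6$ with $x,y$ possibly counted, which after deleting the interior saves only four vertices. You must pick witnesses whose closed neighborhoods lie wholly inside the interior, as above. Second, and more importantly, the interface case distinction you defer is exactly the hard part, and your proposed resolution (find a seventh redundant interior vertex whenever $x$ must be added back) is not how it resolves. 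The actual key fact is: a six-vertex interior set that dominates the interior and contains $a_x$ cannot dominate $a_y$; since $y$ is $a_y$'s only neighbor outside the interior, this forces $y \in D'$ already, so $D' \setminus I_{xy}$ dominates $G$ with no vertex added at all ($y$ re-dominates $x$ via the restored edge $\{x,y\}$). Without that claim--whose proof is a nontrivial chase through the gadget--the exchange argument does not close, so as it stands the proposal has a genuine gap in both directions.
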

\begin{proof}
	We prove equality by establishing matching upper- and lower bounds.
	
	(\textbf{$\leq$}) To show~$\optDS(G') \leq \optDS(G) + 6$, consider a minimum dominating set~$S \subseteq V(G)$ of~$G$. If~$S \cap \{x,y\} = \emptyset$, or~$S \cap \{x,y\} = \{x,y\}$, then the edge~$\{x,y\}$ is not used to dominate vertex~$x$ or~$y$, and therefore~$S \cup \{b_x, b_y, e_x, e_y, g_x, g_y\}$ is a dominating set of size~$|S|+6 = \optDS(G) + 6$ in graph~$G'$; see Figure~\ref{fig:dominatingset:gadget}. If~$S \cap \{x,y\} = \{x\}$, then~$S \cup \{c_x, f_x, h_x, e_y, g_y, a_y\}$ is a dominating set of size~$\optDS(G) + 6$ in~$G'$: the vertex~$a_y$ takes over the role of dominating~$y$ after the direct edge~$\{x,y\}$ is removed, while~$a_x$ is dominated from~$x$. Symmetrically, if~$S \cap \{x,y\} = \{y\}$, then~$S \cup \{c_y, f_y, h_y, e_x, g_x, a_x\}$ is a dominating set in~$G'$ of size~$\optDS(G) + 6$.

	(\textbf{$\geq$}) To show~$\optDS(G') \geq \optDS(G) + 6$, we instead prove~$\optDS(G) \leq \optDS(G') - 6$. Consider a minimum dominating set~$S' \subseteq V(G')$ of~$G'$. Let~$B$ be the vertices in the interior of the double-path structure that was inserted into~$G'$ to replace edge~$\{x,y\}$. If~$|S' \cap B| \geq 7$, then~$(S \setminus B) \cup \{x\}$ is a dominating set of size at most~$|S'| - 6 \leq \optDS(G') - 6$ in~$G$, since~$x$ dominates itself and~$y$ using the edge~$\{x,y\}$. We assume~$|S' \cap B| \leq 6$ in the remainder. Then we have~$|S' \cap B| = 6$: the closed neighborhoods of the six vertices~$\{b_x, b_y, t_x, t_y, t''_x, t''_y\}$ are contained entirely within~$B$, and are pairwise disjoint. Hence these six vertices must be dominated by six distinct vertices from~$B$. If~$S' \cap \{a_x, a_y\} = \emptyset$ then the vertices~$x$ and~$y$ are not dominated from within the double-path structure, implying that~$S' \setminus B$ is a dominating set in~$G$ of size~$|S'| - 6 \leq \optDS(G') - 6$. It remains to consider the case that~$S'$ contains~$a_x$, or~$a_y$, or both. 

	\begin{claim}
		Let~$B' \subseteq B$ be a set of size six that dominates the vertices~$B \setminus \{a_x, a_y\}$. If~$B'$ contains~$a_x$, then~$B'$ does not dominate~$a_y$. Analogously, if~$B'$ contains~$a_y$ then it does not dominate~$a_x$.
	\end{claim}
	\begin{claimproof}
		We prove that if~$a_y \in B'$, then~$B'$ does not dominate~$a_x$. The other statement follows by symmetry. So assume for a contradiction that~$B'$ contains~$a_y$ and dominates~$a_x$, which implies it contains~$a_x$ or~$b_x$. Since~$B'$ dominates the interior of the double-path structure, it contains at least one vertex from the closed neighborhoods of~$t_x, t_y, t''_x, t''_y$. Since these are pairwise disjoint, and do not contain~$a_y$,~$a_x$, or~$b_x$, the set~$B'$ contains a vertex from the closed neighborhood of each of~$\{t_x, t_y, t''_x, t''_y\}$. Since~$B'$ has size six, besides the four vertices from these closed neighborhoods, the vertex~$a_y$, and the one vertex in~$\{a_x, b_x\}$ there can be no further vertices in~$B'$. Hence~$B'$ does not contain~$c_x$ or~$d_x$, as these do not occur in the stated closed neighborhoods. This implies that to dominate~$d_x$, the set~$B'$ contains~$e_x$. Then vertex~$t'_x$ is not dominated by the vertex from~$N_{G'}[t_x]$, and must therefore be dominated by the vertex in~$B'$ from~$N_{G'}[t''_x]$, implying that~$g_x \in B'$. But the vertices mentioned so far do not dominate~$c_y$, and regardless of how a vertex is chosen from the closed neighborhoods of~$t_y$ and~$t''_y$, the resulting choice does not dominate~$c_y$ since no vertex from the closed neighborhoods of~$t_y, t''_y$ is adjacent to~$c_y$. So~$c_y$ is not dominated by~$B'$; a contradiction.
	\end{claimproof}

\begin{figure}[t]
	\centering
	\includegraphics{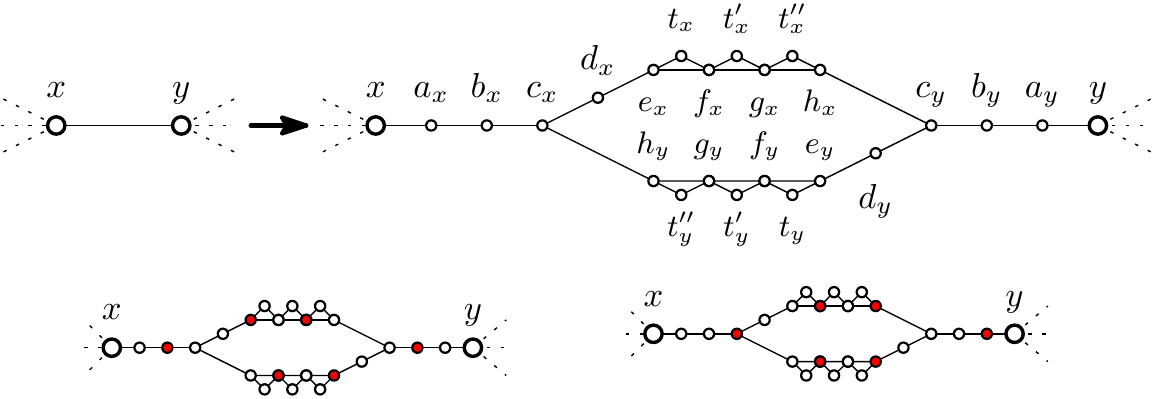}
	\caption{The double-path structure for \textsc{Dominating Set} is the subgraph on the top right minus the vertices~$x$ and~$y$. Top: an edge~$\{x,y\}$ is replaced by a double-path structure. Bottom-left: the interior of the double-path structure can be dominated by six vertices (in red). Bottom-right: there is a set of six vertices that dominates~$y$ and all vertices of the double-path structure except~$a_x$.}
	\label{fig:dominatingset:gadget}
\end{figure}

	Using the claim we finish the proof. The set~$B' := S \cap B$ has size six and dominates all of~$B \setminus \{a_x, a_y\}$, since those vertices cannot be dominated from elsewhere. If~$B'$ contains~$a_x$, then~$B'$ does not dominate~$a_y$. Since~$S'$ is a dominating set, and~$y$ is the only neighbor of~$a_y$ outside of~$B$, it follows that~$y \in S'$. But then~$S' \setminus B$ is a dominating set in~$G$ of size~$|S'| - 6 = \optDS(G') - 6$ in~$G$: the edge~$\{x,y\}$ in~$G$ ensures that~$y$ dominates~$x$. If~$B'$ contains~$a_y$ instead, then the symmetric argument applies. Hence~$\optDS(G) \leq \optDS(G') - 6$.
\end{proof}

Using Lemma~\ref{lemma:insert:doublepath}, we can replace a direct edge by a double-path structure while controlling the domination number. This allows two crossing edges to be reduced to four crossing triangles as in Figure~\ref{fig:dominatingset:planarization}. Even though more crossings are created in this way, these crossing triangles actually help to planarize the graph. The key point is that crossing triangles enforce a dominating set to locally act like a vertex cover, which allows us to exploit a known gadget for \textsc{Vertex Cover}. The following two statements are useful to formalize these ideas. Recall that a vertex~$v$ is \emph{simplicial} in a graph~$G$ if~$N_G(v)$ forms a clique.

\begin{observation} \label{obs:domset:avoid:degtwo}
If~$I$ is an independent set of simplicial degree-two vertices in a graph~$G$, then~$G$ has a minimum dominating set that contains no vertex of~$I$.
\end{observation}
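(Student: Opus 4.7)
The plan is a local swapping argument. Start with an arbitrary minimum dominating set $S$ of $G$, and form $S'$ from $S$ by replacing every vertex $v \in S \cap I$ with one of its two neighbors in $G$. I would argue that $S'$ is still a dominating set of size at most $|S|$, and contains no vertex of $I$, so by minimality of $S$ it is the desired minimum dominating set.

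First I would unpack the hypotheses on a single vertex $v \in I$: since $v$ is simplicial of degree two, its neighborhood $N_G(v) = \{a, b\}$ induces an edge, so $\{v,a,b\}$ forms a triangle. Consequently both $a$ and $b$ satisfy $N_G[a] \supseteq \{v,a,b\} = N_G[v]$ and similarly for $b$. This is the key inclusion that makes the swap work.

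Next I would perform the replacement. For each $v \in S \cap I$, pick an arbitrary neighbor $u_v \in N_G(v)$, and define
\[
S' = (S \setminus I) \cup \{ u_v \mid v \in S \cap I \}.
\]
To verify $S'$ dominates $V(G)$, consider any vertex $x$ and let $y \in S$ with $x \in N_G[y]$. If $y \notin I$ then $y \in S'$ and we are done; otherwise $y \in S \cap I$ and $x \in N_G[y] \subseteq N_G[u_y]$, so $u_y \in S'$ dominates $x$. Clearly $|S'| \leq |S|$. Finally, $S' \cap I = \emptyset$: the first part of the union removes $I$ entirely, and each $u_v$ is a neighbor of some $v \in I$, so by the independence of $I$ we have $u_v \notin I$. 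By minimality of $S$, $S'$ is also a minimum dominating set, completing the proof.

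There is no real obstacle here; the only subtlety worth stating explicitly is that the independence of $I$ is used precisely to guarantee that the newly inserted vertices $u_v$ lie outside $I$, so that a single global swap suffices and one does not need to worry about iterating the argument or about the $u_v$ coinciding with other vertices of $S \cap I$.
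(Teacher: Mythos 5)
Your proof is correct and is precisely the standard swapping argument the paper has in mind (the paper states this as an observation without proof): the triangle formed by a simplicial degree-two vertex gives $N_G[v] \subseteq N_G[u_v]$, and independence of $I$ guarantees the replacements leave $I$ untouched. Nothing to add.
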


\begin{proposition} \label{proposition:domset:is:vc}
	Let~$U$ be a set of vertices in a graph~$G$, such that for each edge~$\{x,y\} \in E(G[U])$ there is a vertex~$v \in V(G) \setminus U$ with~$N_G(v) = \{x,y\}$. Then there is a minimum dominating set~$S$ of~$G$ such that~$S$ forms a vertex cover of~$G[U]$.
\end{proposition}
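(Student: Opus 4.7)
The plan is to reduce this directly to Observation~\ref{obs:domset:avoid:degtwo}. For each edge~$\{x,y\} \in E(G[U])$ fix a witness vertex~$v_{xy} \in V(G) \setminus U$ with~$N_G(v_{xy}) = \{x,y\}$ (such a vertex exists by hypothesis; if several exist, pick one arbitrarily), and let~$I$ be the set of all chosen witnesses. Each~$v_{xy} \in I$ has degree exactly two, and its two neighbors~$x,y$ are adjacent in~$G$, so~$v_{xy}$ is simplicial. Moreover, any two witnesses are nonadjacent: each witness has all its neighbors in~$U$, while the witnesses themselves lie in~$V(G) \setminus U$. Hence~$I$ is an independent set of simplicial degree-two vertices.

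Applying Observation~\ref{obs:domset:avoid:degtwo} to~$I$ yields a minimum dominating set~$S$ of~$G$ with~$S \cap I = \emptyset$. It remains to verify that~$S$ is a vertex cover of~$G[U]$. Suppose for contradiction that some edge~$\{x,y\} \in E(G[U])$ has~$x,y \notin S$. Consider the witness~$v_{xy}$: it is not in~$S$ (as~$v_{xy} \in I$), and its only neighbors are~$x$ and~$y$, neither of which lies in~$S$. Therefore~$v_{xy}$ is not dominated by~$S$, contradicting that~$S$ is a dominating set. Hence every edge of~$G[U]$ has an endpoint in~$S$, as required.

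The proof is essentially mechanical once the right set~$I$ is identified; the only non-routine step is checking that the collection of witness vertices genuinely forms an independent set of simplicial degree-two vertices, which follows immediately from the fact that witnesses lie outside~$U$ while their neighborhoods lie inside~$U$. I do not anticipate any real obstacle.
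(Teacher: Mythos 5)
Your proof is correct and follows essentially the same route as the paper: build the set of degree-two witness vertices, invoke Observation~\ref{obs:domset:avoid:degtwo} to get a minimum dominating set avoiding them, and observe that dominating each witness forces an endpoint of the corresponding edge into the set. Your write-up is only slightly more explicit in verifying that the witnesses form an independent set of simplicial vertices.
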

\begin{proof}
	Construct a set~$I$ as follows. For each~$\{x,y\} \in E(G[U])$, add a vertex~$v \in V(G) \setminus U$ with~$N_G(v) = \{x,y\}$ to~$I$. Then~$I$ is an independent set of simplicial degree-two vertices. By Observation~\ref{obs:domset:avoid:degtwo} there is a minimum dominating set~$S$ of~$G$ that contains no vertex of~$I$. Then~$S \cap U$ is a vertex cover of~$G[U]$: for an arbitrary edge~$\{x,y\} \in E(G[U])$ there is a vertex~$v$ in~$I$ whose open neighborhood is~$\{x,y\}$. Since~$I \cap S = \emptyset$, at least one of~$x$ and~$y$ belongs to~$S$ to dominate~$v$. Hence the edge~$\{x,y\}$ is covered by~$S$.
\end{proof}

Proposition~\ref{proposition:domset:is:vc} relates minimum dominating sets to vertex covers. We therefore use a simplified version of a \textsc{Vertex Cover} crossover gadget in our design. We exploit the graph~$\Hvc$ with four terminals~$\{x,y,p,q\}$ that is shown in Figure~\ref{fig:vertexcover:gadget}. It was obtained by applying the ``folding'' reduction rule for \textsc{Vertex Cover}~\cite[Lemma 2.3]{ChenKJ01} on the gadget by Garey et al.~\cite{GareyJS76} and omitting two superfluous edges. We use the following property of the graph~$\Hvc$. It states that in \Hvc, for every axis from which a vertex cover contains no terminal vertex, the number of non-terminal vertices used in a vertex cover increases.

\begin{proposition} \label{prop:axes:hvc}
	Let $S$ be a vertex cover of \Hvc and let~$\ell \in \{0,1,2\}$ be the number of pairs among~$\{p,q\}$ and~$\{x,y\}$ from which~$S$ contains no vertices. Then~$|S \setminus \{p,q,x,y\}| \geq 9 + \ell$.
\end{proposition}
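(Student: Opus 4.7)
The plan is to prove the proposition by case analysis on $\ell$, in each case exhibiting a collection of substructures in the non-terminal part of $\Hvc$ that force enough vertices into any vertex cover. For the base case $\ell=0$, I would inspect $\Hvc[V(\Hvc)\setminus\{p,q,x,y\}]$ and identify nine pairwise vertex-disjoint ``certificates'' -- most likely a matching of size $9$, or a combination of vertex-disjoint edges and odd cycles -- whose endpoints are all non-terminal. Since any vertex cover must contain at least one vertex from each certificate, this immediately yields $|S\setminus\{p,q,x,y\}|\geq 9$. Because $\Hvc$ is obtained from the Garey--Johnson--Stockmeyer gadget by a single folding step and the removal of two superfluous edges, the crucial edges forming such a matching remain present, and the concrete certificate list can simply be read off from Figure~\ref{fig:vertexcover:gadget}.

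For $\ell=1$, assume without loss of generality that $S\cap\{p,q\}=\emptyset$; the case $S\cap\{x,y\}=\emptyset$ is symmetric. Then every edge incident to $p$ or $q$ must be covered from the non-terminal side, i.e.\ every non-terminal neighbor of $p$ or of $q$ that is adjacent to the other terminal of the pair is forced into $S$. The key observation I would use is that at least one such forced non-terminal vertex can be chosen to lie outside all nine certificates of the $\ell=0$ case (or, if necessary, one re-selects the certificates so that the neighborhoods of $p$ and $q$ are left free). This gives the extra $+1$ and hence $|S\setminus\{p,q,x,y\}|\geq 10$. The case $\ell=2$ is handled by combining the two arguments: both pairs of terminals are absent from $S$, so non-terminal neighbors of $p,q$ and non-terminal neighbors of $x,y$ are simultaneously forced into $S$, and by the planar layout of $\Hvc$ these two forced vertices can be chosen to be distinct from each other and from the nine certificates, yielding $|S\setminus\{p,q,x,y\}|\geq 11$.

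The main obstacle, and the step that cannot be done purely schematically, is ensuring disjointness of the three contributions: the nine baseline certificates, the vertex forced by excluding $\{p,q\}$, and the vertex forced by excluding $\{x,y\}$. A naive choice of certificates may already use up all non-terminal neighbors of one pair of terminals, which would prevent the counting from going through. I would therefore fix, once and for all, a specific matching/odd-cycle packing inside $\Hvc$ that deliberately avoids a designated pair of non-terminal vertices adjacent to $\{p,q\}$ and a designated pair adjacent to $\{x,y\}$; the four ``reserve'' vertices then provide the slack needed to absorb the $+\ell$ correction in all cases. Verifying that such a packing exists is a finite check directly on Figure~\ref{fig:vertexcover:gadget}.
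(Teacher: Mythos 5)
There is a genuine gap in your plan for the cases $\ell\geq 1$, and it is exactly at the step you yourself flag as the main obstacle. The baseline bound of $9$ is obtained (as in the paper) from a packing of four vertex-disjoint triangles plus one disjoint edge among the non-terminals of \Hvc; but \Hvc has $18$ vertices, so only $14$ non-terminals, and $4\cdot 3+2=14$ — the packing necessarily uses \emph{every} non-terminal vertex. Your proposed fix of reserving four designated non-terminal neighbors of the terminals outside the packing is therefore numerically infeasible: a packing of edges, triangles, and odd cycles on the remaining $10$ vertices forces at most about $6$--$7$ vertices into a cover (each triangle forces $2$ of its $3$ vertices, each edge $1$ of $2$, each $(2k{+}1)$-cycle only $k{+}1$ of $2k{+}1$), far short of the required $9$. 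Nor can you simply add the forced neighbors of an excluded terminal pair on top of the baseline $9$: those neighbors lie inside the certificates, and a certificate only guarantees \emph{some} two of its three vertices are in $S$, so the forced neighbor may well be one of the vertices the certificate already accounts for. The additive ``$9+\ell$ by disjointness'' structure cannot be made to work.

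The paper resolves this by abandoning additivity and proving stronger standalone bounds per excluded pair. If $S\cap\{p,q\}=\emptyset$, then a direct count (the forced neighbors of $p$ and $q$ plus one endpoint of each of three further disjoint edges) shows $S$ contains at least $11$ non-terminals, which already covers both $\ell=1$ and $\ell=2$. Otherwise $S$ meets $\{p,q\}$, so $\ell\leq 1$, and the only remaining case is $S\cap\{x,y\}=\emptyset$, for which a count of the four forced neighbors of $x,y$ plus three vertices from each of two disjoint five-cycles gives $10\geq 9+1$. Note the asymmetry is essential: the two single-pair bounds are different ($11$ versus $10$), and the case order is chosen so that the stronger one absorbs $\ell=2$ entirely. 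If you want to salvage your write-up, replace the disjointness argument with two such self-contained counts read off from Figure~\ref{fig:vertexcover:gadget}.
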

\begin{proof}
We first show~$|S \setminus \{p,q,x,y\}| \geq 9$ for any vertex cover~$S$ of~$\Hvc$, proving the claim for~$\ell=0$. The non-terminal vertices of~$\Hvc$ can be partitioned into four vertex-disjoint triangles and an edge that is vertex-disjoint from the triangles. From any triangle, a vertex cover contains at least two vertices. From the remaining edge, it contains at least one vertex.
	
	If~$S$ contains no vertex of~$\{p,q\}$, then as illustrated in the middle of Figure~\ref{fig:vertexcover:gadget},~$S$ contains at least eleven non-terminals. Hence~$|S \setminus \{p,q,x,y\}| \geq 11 \geq 9 + \ell$.
	
	If the previous case does not apply, then~$\ell \leq 1$ since~$S$ contains a vertex of~$\{p,q\}$. If~$S$ contains no vertex of~$\{x,y\}$, then as illustrated on the right of Figure~\ref{fig:vertexcover:gadget},~$S$ contains at least ten non-terminals. Hence~$|S \setminus \{p,q,x,y\}| \geq 10 \geq 9 + \ell$.
\end{proof}

Using Proposition~\ref{prop:axes:hvc} we prove that replacing two crossing triangles in a \textsc{Dominating Set} instance by the gadget, increases the optimum by exactly nine.

\begin{figure}[t]
	\centering
	\includegraphics{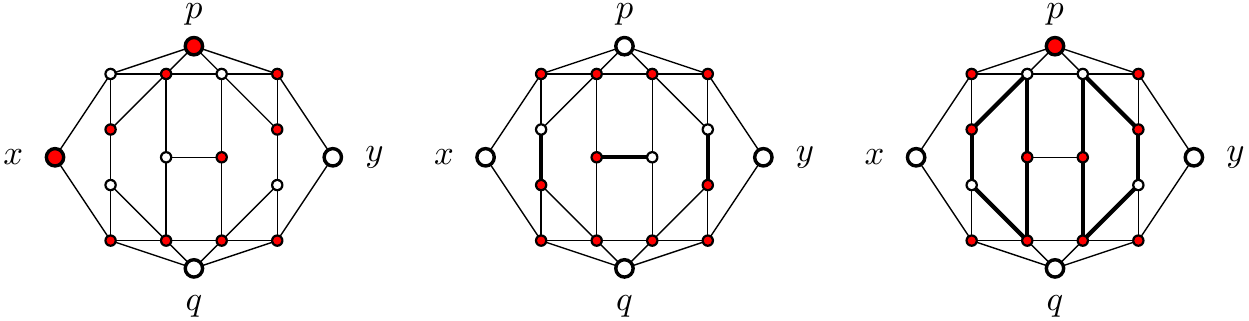}
	\caption{Three copies of the 18-vertex gadget graph~$\Hvc$, which has four terminals~$\{x,y,p,q\}$. Left: A vertex cover for~$\Hvc$ that contains~$p$ and~$q$ and has size eleven is shown in red. Middle: Any vertex cover for~$\Hvc$ that does not contain~$p$ or~$q$ contains the neighbors of~$p$ and~$q$ and at least one endpoint of the three thick edges, and contains at least eleven non-terminals. Right: Any vertex cover for~$\Hvc$ that does not contain~$x$ or~$y$ contains the four neighbors of~$x$ and~$y$ and at least three vertices from each of the two highlighted five-cycles, and contains at least ten non-terminals.}
	\label{fig:vertexcover:gadget}
\end{figure}

\begin{lemma} \label{lemma:replace:triangles}
	Let~$G$ be a graph, and let~$\{x,y,z\}$ and~$\{p,q,r\}$ be two vertex-disjoint triangles in~$G$ such that~$z$ and~$r$ have degree two in~$G$. Then the graph~$G'$ obtained from~$G$ by replacing~$z$ and~$r$ by \Hvc as in Figure~\ref{fig:dominatingset:triangle:gadget} satisfies~$\optDS(G') = \optDS(G) + 9$.
\end{lemma}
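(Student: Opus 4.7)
The plan is to prove $\optDS(G') = \optDS(G)+9$ by establishing matching upper and lower bounds, exactly paralleling the structure of Lemma~\ref{lemma:insert:doublepath} but now using Propositions~\ref{proposition:domset:is:vc} and~\ref{prop:axes:hvc} to translate between dominating sets of the replacement and vertex covers of~$\Hvc$.

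For the upper bound $\optDS(G') \leq \optDS(G)+9$, I would start with a minimum dominating set $S$ of~$G$. Since $z$ and $r$ are independent simplicial degree-two vertices, Observation~\ref{obs:domset:avoid:degtwo} lets us assume $z,r \notin S$, which forces $S \cap \{x,y\} \neq \emptyset$ and $S \cap \{p,q\} \neq \emptyset$ in order to dominate $z$ and $r$. From the leftmost picture of Figure~\ref{fig:vertexcover:gadget} one reads off a set $T$ of nine non-terminal vertices such that $T \cup \{p,q\}$ is a vertex cover of~$\Hvc$, and hence a dominating set of~$\Hvc$ (as $\Hvc$ has no isolated vertex). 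I would then argue that $S \cup T$ is a dominating set of $G'$ of size $|S|+9$: vertices outside $V(\Hvc)$ are dominated as before by~$S$; non-terminal vertices of $\Hvc$ have all neighbors inside $V(\Hvc)$ and every such incident edge is hit by $T \cup \{p,q\}$; and the terminals are dominated by the vertices of $S$ in $\{x,y\}$ and $\{p,q\}$ via the surviving triangle edges.

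For the lower bound $\optDS(G') \geq \optDS(G)+9$, take a minimum dominating set $S'$ of~$G'$. The replacement gadget of Figure~\ref{fig:dominatingset:triangle:gadget} attaches a hidden simplicial degree-two vertex alongside every edge of~$\Hvc$, so Proposition~\ref{proposition:domset:is:vc} applied with $U := V(\Hvc)$ lets us assume that $C := S' \cap V(\Hvc)$ is a vertex cover of~$\Hvc$. Let $\ell \in \{0,1,2\}$ count the pairs among $\{x,y\}$ and $\{p,q\}$ disjoint from~$C$. Proposition~\ref{prop:axes:hvc} yields $|C \setminus \{x,y,p,q\}| \geq 9 + \ell$. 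I would construct a dominating set $S$ of $G$ from $S'$ by dropping all of $C \setminus \{x,y,p,q\}$ and, for each of the $\ell$ missing pairs, adding one representative (say $x$, resp.~$p$) to dominate $z$, resp.~$r$, in~$G$. Then $|S| \leq |S'|-(9+\ell)+\ell = |S'|-9$, giving $\optDS(G) \leq \optDS(G')-9$.

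The main obstacle will be the bookkeeping for domination after the substitution, in particular verifying that no edges connect $V(\Hvc) \setminus \{x,y,p,q\}$ to $V(G') \setminus V(\Hvc)$, so that the interior of the gadget has no ``long-range'' domination duties, and checking that in the reverse direction the replacement terminals really do pick up the domination of $z$ and $r$ via the preserved triangle edges. This reduces to a routine case analysis on the two axes $\{x,y\}$ and $\{p,q\}$.
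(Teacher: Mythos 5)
Your proposal follows the paper's proof essentially verbatim: Observation~\ref{obs:domset:avoid:degtwo} to avoid $z$ and $r$ in the upper bound, and Propositions~\ref{proposition:domset:is:vc} and~\ref{prop:axes:hvc} with the accounting $|S| \leq |S'| - (9+\ell) + \ell$ in the lower bound. The one detail to repair is in the upper bound: since $S$ is only guaranteed to contain one terminal from \emph{each} axis (say $p$ and $x$), you must take $T$ to be the nine non-terminals of a size-eleven vertex cover of \Hvc that contains $p$ and $x$ --- not one containing $p$ and $q$ --- because otherwise $(S \cup T) \cap V(\Hvc)$ need not be a vertex cover of \Hvc, and the hidden degree-two vertices sitting on edges covered only by $q$ would go undominated; this is exactly the choice the paper makes.
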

\begin{proof}
	We prove equality by establishing matching upper- and lower bounds.

	(\textbf{$\leq$}) Consider a minimum dominating set~$S$ in~$G$ that does not contain~$r$ or~$z$, which exists by Observation~\ref{obs:domset:avoid:degtwo}. Then~$S$ contains at least one of~$\{p,q\}$ to dominate~$z$, and at least one of~$\{x,y\}$ to dominate~$r$. We assume without loss of generality, by symmetry, that~$p \in S$ and~$x \in S$. As shown in Figure~\ref{fig:vertexcover:gadget}, there is a vertex cover for~$\Hvc$ of size~$11$ that contains~$p$ and~$x$, and therefore contains nine vertices from the interior of~$\Hvc$. Let~$T$ be this set of nine vertices, and note that~$T$ includes a neighbor of~$q$ and a neighbor of~$y$. We claim that~$S' := S \cup T$ is a dominating set for~$G'$ of size~$|S| + 9 \leq \optDS(G)+9$. Since~$T \cup \{p,x\}$ is a vertex cover of~$\Hvc$ and~$\Hvc$ has no isolated vertices, each vertex of~$\Hvc$ has a neighbor in~$S'$ and is dominated. The degree-two vertices that are inserted into~$G'$ in the last step are dominated by the vertex that covers the edge with which they form a triangle. Vertices~$q$ and~$y$ are dominated from their neighbors in~$T$. Finally, the remaining vertices of~$G'$ are dominated in the same way as in~$G$.
	
	(\textbf{$\geq$}) To prove~$\optDS(G') \geq \optDS(G) + 9$, we instead show~$\optDS(G) \leq \optDS(G') - 9$. Let~$U \subseteq V(G')$ denote the vertices from the copy of~$\Hvc$ that was inserted;~$U$ contains~$p,q,x,y$, but~$U$ does not contain the degree-two vertices that were inserted as the last step of the transformation. By Proposition~\ref{proposition:domset:is:vc}, there is a minimum dominating set~$S'$ of~$G'$ such that~$S' \cap U$ is a vertex cover of~$G'[U]$. Let~$\ell \in \{0,1,2\}$ be the number of pairs among~$\{p,q\}$ and~$\{x,y\}$ from which~$S'$ contains no vertices. Since~$S' \cap U$ is a vertex cover of~$G'[U]$, which is isomorphic to~$\Hvc$, by Proposition~\ref{prop:axes:hvc} we know that~$|(S' \cap U) \setminus \{p,q,x,y\}| \geq 9 + \ell$. Now let~$S$ be obtained from~$S'$ by removing all vertices of~$(S' \cap U) \setminus \{p,q,x,y\}$, adding vertex~$x$ if~$S' \cap \{x,y\} = \emptyset$, and adding vertex~$y$ if~$S' \cap \{p,q\} = \emptyset$. Then~$|S| \leq |S'| - 9$ since we remove~$9+\ell$ vertices and add~$\ell$ new ones. Since~$S$ contains at least one vertex from~$\{p,q\}$ and at least one vertex from~$\{x,y\}$, it dominates the two triangles in~$G$. Since it contains a superset of the terminal vertices that~$S'$ contains, the remaining vertices of the graph are dominated as before. Hence~$S$ is a dominating set in~$G$ and~$\optDS(G) \leq \optDS(G') - 9$.
\end{proof}

\begin{figure}[t]
	\centering
	\includegraphics{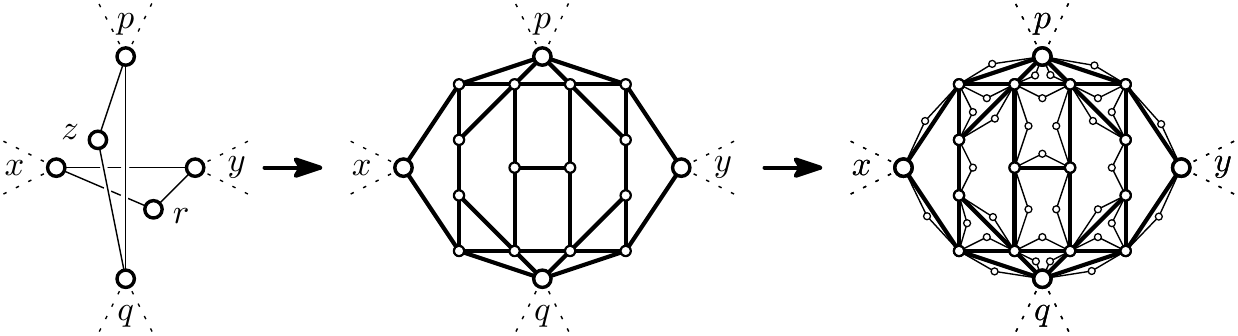}
	\caption{Illustration of how a crossing between two triangles is eliminated in an instance of \textsc{Dominating Set}. A copy of~$\Hvc$ is inserted, whose four terminals are identified with the four endpoints of the crossing edge. For each edge~$\{u,v\}$ of the inserted copy of~$\Hvc$, an additional degree-two vertex is inserted that forms a triangle with~$u$ and~$v$.}
	\label{fig:dominatingset:triangle:gadget}
\end{figure}

Using the material so far, we can prove that the transformation operation in Figure~\ref{fig:dominatingset:planarization} increases the size of an optimal dominating set by exactly~$48$.

\begin{lemma} \label{lemma:planarizedomset:deltaopt}
	Let~$\{a,b\}$ and~$\{c,d\}$ be two disjoint edges of a graph~$G$. Let~$G'$ be the graph obtained by replacing these two edges as in Figure~\ref{fig:dominatingset:planarization}. Then~$\optDS(G') = \optDS(G) + 48$.
\end{lemma}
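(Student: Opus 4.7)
The plan is to decompose the single transformation depicted in Figure~\ref{fig:dominatingset:planarization} into a sequence of local operations whose effect on $\optDS$ is already quantified by the two previous lemmas, and to add up the resulting contributions. More precisely, the transformation proceeds in two stages: first each of the two crossing edges $\{a,b\}$ and $\{c,d\}$ is individually replaced by a double-path structure, producing the middle picture of Figure~\ref{fig:dominatingset:planarization} with four crossing triangles; then each of these four crossings of triangles is eliminated by inserting a copy of~$\Hvc$ together with its auxiliary degree-two vertices, producing the right picture.

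First I would apply Lemma~\ref{lemma:insert:doublepath} to the edge $\{a,b\}$ of~$G$, obtaining an intermediate graph~$G_1$ with $\optDS(G_1) = \optDS(G) + 6$. Because the double-path structure is inserted on a single edge and the structures for $\{a,b\}$ and $\{c,d\}$ do not share vertices (the edges are disjoint), I can apply Lemma~\ref{lemma:insert:doublepath} again to the edge $\{c,d\}$ of~$G_1$, obtaining a graph~$G_2$ with $\optDS(G_2) = \optDS(G_1) + 6 = \optDS(G) + 12$. At this point $G_2$ is exactly the middle picture of Figure~\ref{fig:dominatingset:planarization}, in which the original crossing has been replaced by four crossings between red triangles, each red triangle consisting of two double-path vertices together with their hidden simplicial degree-two vertex.

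Next I would eliminate the four crossings one by one by applying Lemma~\ref{lemma:replace:triangles} four times. For each application I must check the hypotheses: the two crossing triangles in question are vertex-disjoint (they lie in different double-path structures, or in disjoint portions of the same structure), and each apex vertex is the hidden degree-two vertex of its red edge, so it has degree exactly two in the current graph and its two neighbors together with itself form the claimed triangle. Each application therefore adds exactly $9$ to $\optDS$, for a combined contribution of $4 \cdot 9 = 36$. Adding this to the $12$ accumulated in the first stage gives $\optDS(G') = \optDS(G) + 48$, as required.

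The only step that requires any real care will be the bookkeeping verifying that the four triangle-crossings produced in the first stage indeed satisfy the precise hypotheses of Lemma~\ref{lemma:replace:triangles} (in particular that the degree-two property of the apex vertex is preserved across the previous replacements), and that consecutive applications of Lemma~\ref{lemma:replace:triangles} do not disturb the hypotheses of later applications. Both of these are immediate from the local nature of the two gadgets: each gadget is glued to the rest of the graph only at its named boundary vertices, so operations performed on edge-disjoint, vertex-disjoint regions commute and preserve each other's applicability. With this observation in hand the lemma follows by the simple arithmetic above.
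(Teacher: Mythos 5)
Your proposal is correct and follows essentially the same route as the paper's own proof: decompose the transformation into two double-path insertions and four triangle-crossing replacements, then apply Lemma~\ref{lemma:insert:doublepath} and Lemma~\ref{lemma:replace:triangles} to get $2\cdot 6 + 4\cdot 9 = 48$. The extra care you take in verifying that the hypotheses of Lemma~\ref{lemma:replace:triangles} are preserved across successive applications is a reasonable addition but not a departure from the paper's argument.
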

\begin{proof}
	The transformation depicted in Figure~\ref{fig:dominatingset:planarization} can be broken down into six steps: transform~$\{a,b\}$ into a double-path structure, transform~$\{c,d\}$ into a double-path structure, and perform four operations in which crossing triangles are replaced by gadgets. By Lemma~\ref{lemma:insert:doublepath}, the two double-path insertions increase the size of a minimum dominating set by exactly~$2\cdot 6$. By Lemma~\ref{lemma:replace:triangles}, the four steps in which crossing triangles are eliminated increase the size of a minimum domination set by exactly~$4 \cdot 9$. Hence~$\optDS(G') = \optDS(G) + 12+36$.
\end{proof}

Using Lemma~\ref{lemma:planarizedomset:deltaopt} we easily obtain the following.

\begin{lemma} \label{lemma:useful:crossover:domset}
	There is a useful crossover gadget for \textsc{Dominating Set}.
\end{lemma}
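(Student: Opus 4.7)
The plan is to extract the crossover gadget directly from the construction in Figure~\ref{fig:dominatingset:planarization}. Apply the transformation of that figure to an isolated pair of disjoint edges $\{a,b\}$ and $\{c,d\}$, and define $H$ to be everything in the resulting graph except the four vertices $a,b,c,d$. The four terminals $u,u',v,v'$ are the vertices of $H$ that end up adjacent to $a,b,c,d$ respectively: $u,u'$ are the endpoints of the double-path structure that replaces $\{a,b\}$ (the $a_x$ and $a_y$ vertices from Figure~\ref{fig:dominatingset:gadget}), and $v,v'$ play the analogous role for $\{c,d\}$.

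First I would check that this $H$ meets Definition~\ref{def:crossover:gadget}. The right-hand picture in Figure~\ref{fig:dominatingset:planarization} already exhibits a planar drawing, because every original crossing has been eliminated either by routing the double-path structures away from each other or by replacing a triangle-triangle crossing with the planar gadget of Figure~\ref{fig:dominatingset:triangle:gadget}. All four terminals lie on the outer face of this drawing, and walking around that outer face encounters them in the cyclic order $u,v,u',v'$, because the construction started from a crossing of the edges $\{a,b\}$ and $\{c,d\}$, so the endpoints of the two original edges alternate around the boundary. A closed curve along the outer face then witnesses Condition~(2).

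Second I would verify usefulness (Definition~\ref{def:useful:crossover}) with $c_\Pi = 48$. Given any instance $(G,t)$ with disjoint edges $\{a,b\}, \{c,d\}$, the replacement operation of Definition~\ref{def:replace} that substitutes $H$ for these edges yields exactly the graph obtained by applying the six-step transformation of Figure~\ref{fig:dominatingset:planarization} (two double-path insertions followed by four triangle-crossing eliminations). Lemma~\ref{lemma:planarizedomset:deltaopt} already established $\optDS(G') = \optDS(G) + 48$ for precisely this transformation, so $(G,t)$ is a yes-instance of \textsc{Dominating Set} if and only if $(G', t+48)$ is.

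The only obstacle is a small amount of bookkeeping: one must confirm that (i) the outer-face drawing of $H$ really does place the terminals in the alternating order required for a crossover gadget (which follows from the geometric starting point of the construction), and (ii) Lemma~\ref{lemma:planarizedomset:deltaopt} applies verbatim when the two edges $\{a,b\},\{c,d\}$ are embedded in an arbitrary host graph $G$. The latter is immediate because the sub-lemmas (Lemmas~\ref{lemma:insert:doublepath} and~\ref{lemma:replace:triangles}) were already proved as purely local gadget replacements whose correctness does not depend on what lies outside the replaced subgraph. With these observations the lemma follows.
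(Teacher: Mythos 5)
Your proposal matches the paper's own proof: both take the gadget of Figure~\ref{fig:dominatingset:planarization} as the crossover gadget, observe that it is planar with the terminals on the outer face in the alternating cyclic order required by Definition~\ref{def:crossover:gadget}, and invoke Lemma~\ref{lemma:planarizedomset:deltaopt} to get usefulness with $c_\Pi = 48$. Your extra bookkeeping about the locality of Lemmas~\ref{lemma:insert:doublepath} and~\ref{lemma:replace:triangles} is correct and only makes explicit what the paper leaves implicit.
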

\begin{proof}
	The gadget that is inserted to replace two edges~$\{a,b\}$ and~$\{c,d\}$ in the procedure of Figure~\ref{fig:dominatingset:planarization} is planar and has its terminals~$u,u',v,v'$ on the outer face in the appropriate cyclic ordering. Since Lemma~\ref{lemma:planarizedomset:deltaopt} shows that the replacement increases the size of a minimum dominating set by exactly 48, it follows that the structure serves as a useful crossover gadget for \textsc{Dominating Set} as per Definition~\ref{def:useful:crossover}.
\end{proof}

Theorem~\ref{thm:domset:planar:lb} now follows by combining Lemma~\ref{lemma:useful:crossover:domset} with the planarization argument of Theorem~\ref{thm:cutwidth:planarization} and the lower bound for the nonplanar case given by Theorem~\ref{thm:domset:nonplanar:lb}.

{
\renewcommand{\thetheorem}{\ref{thm:domset:planar:lb}}
\begin{theorem}
\domsetthm
\end{theorem}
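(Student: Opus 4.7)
The plan is to derive Theorem~\ref{thm:domset:planar:lb} by composing the three ingredients assembled in the preceding sections: the nonplanar SETH lower bound of Theorem~\ref{thm:domset:nonplanar:lb}, the planarization framework of Theorem~\ref{thm:cutwidth:planarization}, and the useful crossover gadget for \DominatingSet supplied by Lemma~\ref{lemma:useful:crossover:domset}. The argument is a short contrapositive reduction; essentially no new technical work is required, since the gadget design (Figures~\ref{fig:dominatingset:planarization}--\ref{fig:dominatingset:triangle:gadget} and Lemmas~\ref{lemma:insert:doublepath}--\ref{lemma:planarizedomset:deltaopt}) has already absorbed the difficult part.

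I would argue by contraposition. Suppose some $\varepsilon > 0$ and algorithm~$\mathcal{A}$ decides \DominatingSet on planar inputs with a layout of cutwidth~$k$ in time $\Oh^*((3-\varepsilon)^k)$. The goal is to convert~$\mathcal{A}$ into an algorithm of the same asymptotic runtime for arbitrary (possibly nonplanar) graphs of bounded cutwidth. Given an input $(G,t)$ together with a linear layout~$\pi$ of $G$ with $\cutw_\pi(G)=k$, first apply the polynomial-time procedure of Theorem~\ref{thm:cutwidth:planarization} using the useful crossover gadget~$H$ guaranteed by Lemma~\ref{lemma:useful:crossover:domset}. This produces an equivalent instance $(G',t')$ together with a layout~$\pi'$ such that $G'$ is planar and $\cutw_{\pi'}(G') \leq k + \cutw(H) + 4$.

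Because $H$ is fixed once and for all, the quantity $c := \cutw(H)+4$ is an absolute constant independent of the input. Running $\mathcal{A}$ on $(G',t',\pi')$ therefore decides the instance in time $\Oh^*((3-\varepsilon)^{k+c}) = (3-\varepsilon)^c \cdot \Oh^*((3-\varepsilon)^k) = \Oh^*((3-\varepsilon)^k)$, since the constant $(3-\varepsilon)^c$ can be absorbed into the $\Oh^*$ notation. Equivalence of $(G,t)$ and $(G',t')$ granted by Theorem~\ref{thm:cutwidth:planarization} then yields an algorithm with the forbidden runtime for the nonplanar parameterization, contradicting Theorem~\ref{thm:domset:nonplanar:lb}.

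No step here is a significant obstacle; the only point requiring explicit mention is that $\cutw(H)$ is a universal constant (the crossover gadget from Lemma~\ref{lemma:useful:crossover:domset} has fixed size independent of the input), so the additive blow-up in cutwidth translates into a multiplicative factor that the $\Oh^*$ notation hides. This is exactly the property that makes the planarization step lossless in the base of the exponent, and is why investing effort into a \emph{constant-size} crossover gadget for \DominatingSet, rather than an input-dependent one, was worthwhile.
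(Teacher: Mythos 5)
Your proposal is correct and follows essentially the same route as the paper's own proof: a contrapositive reduction that planarizes a nonplanar instance via Theorem~\ref{thm:cutwidth:planarization} and the crossover gadget of Lemma~\ref{lemma:useful:crossover:domset}, runs the hypothetical planar algorithm on a layout of cutwidth~$k + \Oh(1)$, and contradicts Theorem~\ref{thm:domset:nonplanar:lb}. Your explicit remark that the additive constant in the cutwidth becomes a multiplicative constant absorbed by the~$\Oh^*$ notation is exactly the point the paper relies on implicitly.
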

\addtocounter{theorem}{-1}
}
\begin{proof}
	Suppose \textsc{Dominating Set} on a planar graph with a given linear layout of cutwidth~$k$ can be solved in~$\Oh^*((3-\varepsilon)^k)$ time for some~$\varepsilon > 0$, by an algorithm called~$A$. Then \textsc{Dominating Set} on a nonplanar graph with a given layout of cutwidth~$k$ can be solved in time~$\Oh^*((3-\varepsilon)^k)$ by reducing it to a planar graph with a linear layout of cutwidth~$k + \Oh(1)$ (using Theorem~\ref{thm:cutwidth:planarization} and the existence of a useful crossover gadget; this blows up the graph size by at most a polynomial factor) and then running~$A$. By Theorem~\ref{thm:domset:nonplanar:lb}, this contradicts SETH.
\end{proof}

\section{Conclusion}\label{sec:conc}
In this work we have investigated whether SETH-based lower bounds for solving problems on graphs of bounded treewidth also apply for (1) planar graphs and (2) graphs of bounded cutwidth. To answer these questions, we showed that the graph parameter cutwidth can be preserved when reducing to a planar instance using suitably restricted crossover gadgets. 

For both problems considered in this work, the runtime lower bound for solving the problem on graphs of bounded cutwidth continues to hold for \emph{planar} graphs of bounded cutwidth. Hence planarity seems to offer no algorithmic advantage when working with graphs of bounded cutwidth. Moreover, for both \textsc{Independent Set} and \textsc{Dominating Set} the runtime lower bound for the treewidth parameterization also applies for cutwidth.

Future work may explore other combinatorial problems on graphs of bounded cutwidth. For example, what is the optimal running time for \textsc{Feedback Vertex Set}, \textsc{Odd Cycle Transversal}, or \textsc{Hamiltonian Cycle} on graphs of bounded cutwidth? What is the complexity of the cutwidth parameterization of these problems on planar graphs? For the \textsc{Graph $q$-Coloring} problem, these questions are answered in a recent manuscript by an overlapping set of authors~\cite{JansenN17}: planarity offers no advantage, but the parameterization by cutwidth~$k$ can be solved in time~$\Oh^*(2^k)$ for all~$q$, sharply contrasting that the treewidth parameterization cannot be solved in time~$\Oh^*((q-\varepsilon)^k)$ under SETH.

%Open problems: look at other problems parameterized by cutwidth. What about the slightly superexponential problems~\cite{LokshtanovMS11a}? Dynamic programming on sphere-cut decompositions probably speeds them up. Are there crossover gadgets for \textsc{Feedback Vertex Set} or \textsc{Odd Cycle Transversal}? What is the best exponent that can be obtained for solving \textsc{Partition into Triangles} on graphs of cutwidth~$k$? We expect it to be solvable in~$\Oh^*(c^k)$ time for some~$c\leq 2^{2/3}$.

%What about DP on graphs of bounded bandwidth? The Independent Set lower bound for nonplanar graphs can be salvaged, but for planar graphs it seems to break, and for other problems requiring more involved gadgets on the repeater line it also seems to break.

\bibliography{Cutwidth}

\begin{thebibliography}{10}

\bibitem{BasteS15}
Julien Baste and Ignasi Sau.
\newblock The role of planarity in connectivity problems parameterized by
  treewidth.
\newblock {\em Theor. Comput. Sci.}, 570:1--14, 2015.
\newblock \href {http://dx.doi.org/10.1016/j.tcs.2014.12.010}
  {\path{doi:10.1016/j.tcs.2014.12.010}}.

\bibitem{BjorklundHKK07}
Andreas Bj{\"{o}}rklund, Thore Husfeldt, Petteri Kaski, and Mikko Koivisto.
\newblock Fourier meets {M}{\"{o}}bius: {F}ast subset convolution.
\newblock In David~S. Johnson and Uriel Feige, editors, {\em Proc. 39th STOC},
  pages 67--74. {ACM}, 2007.
\newblock \href {http://dx.doi.org/10.1145/1250790.1250801}
  {\path{doi:10.1145/1250790.1250801}}.

\bibitem{Bodlaender98}
Hans~L. Bodlaender.
\newblock A partial $k$-arboretum of graphs with bounded treewidth.
\newblock {\em Theor. Comput. Sci.}, 209(1-2):1--45, 1998.
\newblock \href {http://dx.doi.org/10.1016/S0304-3975(97)00228-4}
  {\path{doi:10.1016/S0304-3975(97)00228-4}}.

\bibitem{BodlaenderCKN15}
Hans~L. Bodlaender, Marek Cygan, Stefan Kratsch, and Jesper Nederlof.
\newblock Deterministic single exponential time algorithms for connectivity
  problems parameterized by treewidth.
\newblock {\em Inf. Comput.}, 243:86--111, 2015.
\newblock \href {http://dx.doi.org/10.1016/j.ic.2014.12.008}
  {\path{doi:10.1016/j.ic.2014.12.008}}.

\bibitem{BodlaenderK08}
Hans~L. Bodlaender and Arie M. C.~A. Koster.
\newblock Combinatorial optimization on graphs of bounded treewidth.
\newblock {\em Comput. J.}, 51(3):255--269, 2008.
\newblock \href {http://dx.doi.org/10.1093/comjnl/bxm037}
  {\path{doi:10.1093/comjnl/bxm037}}.

\bibitem{ChenKJ01}
Jianer Chen, Iyad~A. Kanj, and Weijia Jia.
\newblock Vertex cover: Further observations and further improvements.
\newblock {\em J. Algorithms}, 41(2):280--301, 2001.
\newblock \href {http://dx.doi.org/10.1006/jagm.2001.1186}
  {\path{doi:10.1006/jagm.2001.1186}}.

\bibitem{Courcelle90}
Bruno Courcelle.
\newblock The monadic second-order logic of graphs {I}: Recognizable sets of
  finite graphs.
\newblock {\em Inf. Comput.}, 85(1):12--75, 1990.
\newblock \href {http://dx.doi.org/10.1016/0890-5401(90)90043-H}
  {\path{doi:10.1016/0890-5401(90)90043-H}}.

\bibitem{CyganKN13}
Marek Cygan, Stefan Kratsch, and Jesper Nederlof.
\newblock Fast hamiltonicity checking via bases of perfect matchings.
\newblock In Dan Boneh, Tim Roughgarden, and Joan Feigenbaum, editors, {\em
  Proc. 45th STOC}, pages 301--310. {ACM}, 2013.
\newblock \href {http://dx.doi.org/10.1145/2488608.2488646}
  {\path{doi:10.1145/2488608.2488646}}.

\bibitem{CyganNPPRW11}
Marek Cygan, Jesper Nederlof, Marcin Pilipczuk, Michal Pilipczuk, Johan M.~M.
  van Rooij, and Jakub~Onufry Wojtaszczyk.
\newblock Solving connectivity problems parameterized by treewidth in single
  exponential time.
\newblock In {\em Proc. 52nd FOCS}, pages 150--159, 2011.
\newblock \href {http://dx.doi.org/10.1109/FOCS.2011.23}
  {\path{doi:10.1109/FOCS.2011.23}}.

\bibitem{DemaineH08}
Erik~D. Demaine and MohammadTaghi Hajiaghayi.
\newblock The bidimensionality theory and its algorithmic applications.
\newblock {\em Comput. J.}, 51(3):292--302, 2008.
\newblock \href {http://dx.doi.org/10.1093/comjnl/bxm033}
  {\path{doi:10.1093/comjnl/bxm033}}.

\bibitem{DiazPS02}
Josep D{\'{\i}}az, Jordi Petit, and Maria~J. Serna.
\newblock A survey of graph layout problems.
\newblock {\em {ACM} Comput. Surv.}, 34(3):313--356, 2002.
\newblock \href {http://dx.doi.org/10.1145/568522.568523}
  {\path{doi:10.1145/568522.568523}}.

\bibitem{Eppstein16}
David Eppstein.
\newblock {P}athwidth of planarized drawing of~{$K_{3,n}$}.
\newblock {TheoryCS} {S}tack{E}xchange question, 2016.
\newblock URL: \url{http://cstheory.stackexchange.com/questions/35974/}.

\bibitem{Eppstein17a}
David Eppstein.
\newblock The effect of planarization on width.
\newblock In {\em Proc. 25th GD}, volume 10692 of {\em LNCS}, pages 560--572,
  2017.
\newblock \href {http://dx.doi.org/10.1007/978-3-319-73915-1_43}
  {\path{doi:10.1007/978-3-319-73915-1_43}}.

\bibitem{FominLPS16}
Fedor~V. Fomin, Daniel Lokshtanov, Fahad Panolan, and Saket Saurabh.
\newblock Efficient computation of representative families with applications in
  parameterized and exact algorithms.
\newblock {\em J. {ACM}}, 63(4):29:1--29:60, 2016.
\newblock \href {http://dx.doi.org/10.1145/2886094}
  {\path{doi:10.1145/2886094}}.

\bibitem{GareyJS76}
M.R. Garey, D.S. Johnson, and L.~Stockmeyer.
\newblock Some simplified {NP}-complete graph problems.
\newblock {\em Theoretical Computer Science}, 1(3):237--267, 1976.
\newblock \href {http://dx.doi.org/10.1016/0304-3975(76)90059-1}
  {\path{doi:10.1016/0304-3975(76)90059-1}}.

\bibitem{GiannopoulouPRT16}
Archontia~C. Giannopoulou, Michal Pilipczuk, Jean{-}Florent Raymond,
  Dimitrios~M. Thilikos, and Marcin Wrochna.
\newblock Cutwidth: Obstructions and algorithmic aspects.
\newblock In {\em Proc. 11th IPEC}, volume~63 of {\em LIPIcs}, pages
  15:1--15:13, 2016.
\newblock \href {http://dx.doi.org/10.4230/LIPIcs.IPEC.2016.15}
  {\path{doi:10.4230/LIPIcs.IPEC.2016.15}}.

\bibitem{ImpagliazzoP01}
Russel Impagliazzo and Ramamohan Paturi.
\newblock On the complexity of $k$-{SAT}.
\newblock {\em J. Comput. Syst. Sci.}, 62(2):367--375, 2001.
\newblock \href {http://dx.doi.org/10.1006/jcss.2000.1727}
  {\path{doi:10.1006/jcss.2000.1727}}.

\bibitem{ImpagliazzoPZ01}
Russell Impagliazzo, Ramamohan Paturi, and Francis Zane.
\newblock Which problems have strongly exponential complexity?
\newblock {\em J. Comput. Syst. Sci.}, 63(4):512--530, 2001.
\newblock \href {http://dx.doi.org/10.1006/jcss.2001.1774}
  {\path{doi:10.1006/jcss.2001.1774}}.

\bibitem{JansenW16}
Bart M.~P. Jansen and Jules J. H.~M. Wulms.
\newblock Lower bounds for protrusion replacement by counting equivalence
  classes.
\newblock In Jiong Guo and Danny Hermelin, editors, {\em Proc. 11th IPEC},
  volume~63 of {\em LIPIcs}, pages 17:1--17:12. Schloss Dagstuhl -
  Leibniz-Zentrum fuer Informatik, 2016.
\newblock \href {http://dx.doi.org/10.4230/LIPIcs.IPEC.2016.17}
  {\path{doi:10.4230/LIPIcs.IPEC.2016.17}}.

\bibitem{JansenN17}
Bart~M.P. Jansen and Jesper Nederlof.
\newblock Computing the chromatic number using graph decompositions via matrix
  rank.
\newblock In {\em Proc. 26th ESA}, 2018.
\newblock In press.

\bibitem{KorachS93}
Ephraim Korach and Nir Solel.
\newblock Tree-width, path-width, and cutwidth.
\newblock {\em Discrete Applied Mathematics}, 43(1):97--101, 1993.
\newblock \href {http://dx.doi.org/10.1016/0166-218X(93)90171-J}
  {\path{doi:10.1016/0166-218X(93)90171-J}}.

\bibitem{LokshtanovMS11}
Daniel Lokshtanov, D\'aniel Marx, and Saket Saurabh.
\newblock Known algorithms on graphs of bounded treewidth are probably optimal.
\newblock In {\em Proc. 22nd SODA}, pages 777--789, 2011.
\newblock \href {http://dx.doi.org/10.1137/1.9781611973082.61}
  {\path{doi:10.1137/1.9781611973082.61}}.

\bibitem{Marx13}
D{\'{a}}niel Marx.
\newblock The square root phenomenon in planar graphs.
\newblock In Michael~R. Fellows, Xuehou Tan, and Binhai Zhu, editors, {\em
  Proc. 3rd FAW-AAIM}, volume 7924 of {\em Lecture Notes in Computer Science},
  page~1. Springer, 2013.
\newblock \href {http://dx.doi.org/10.1007/978-3-642-38756-2_1}
  {\path{doi:10.1007/978-3-642-38756-2_1}}.

\bibitem{Pilipczuk11}
Michal Pilipczuk.
\newblock Problems parameterized by treewidth tractable in single exponential
  time: A logical approach.
\newblock In {\em Proc. 36th MFCS}, pages 520--531, 2011.
\newblock \href {http://dx.doi.org/10.1007/978-3-642-22993-0_47}
  {\path{doi:10.1007/978-3-642-22993-0_47}}.

\bibitem{ThilikosSB05}
Dimitrios~M. Thilikos, Maria~J. Serna, and Hans~L. Bodlaender.
\newblock Cutwidth {I:} {A} linear time fixed parameter algorithm.
\newblock {\em J. Algorithms}, 56(1):1--24, 2005.
\newblock \href {http://dx.doi.org/10.1016/j.jalgor.2004.12.001}
  {\path{doi:10.1016/j.jalgor.2004.12.001}}.

\bibitem{ThilikosSB05a}
Dimitrios~M. Thilikos, Maria~J. Serna, and Hans~L. Bodlaender.
\newblock Cutwidth {II:} {A}lgorithms for partial w-trees of bounded degree.
\newblock {\em J. Algorithms}, 56(1):25--49, 2005.
\newblock \href {http://dx.doi.org/10.1016/j.jalgor.2004.12.003}
  {\path{doi:10.1016/j.jalgor.2004.12.003}}.

\bibitem{RooijBR09}
Johan M.~M. van Rooij, Hans~L. Bodlaender, and Peter Rossmanith.
\newblock Dynamic programming on tree decompositions using generalised fast
  subset convolution.
\newblock In {\em Proc. 17th ESA}, pages 566--577, 2009.
\newblock \href {http://dx.doi.org/10.1007/978-3-642-04128-0_51}
  {\path{doi:10.1007/978-3-642-04128-0_51}}.

\end{thebibliography}

\clearpage
\appendix

\section{Lower bound for Independent Set on graphs of bounded cutwidth}
\begin{theorem} \label{thm:ind:set:cutwidth:lb}
Assuming SETH, there is no~$\varepsilon > 0$ such that \textsc{Independent Set} on a (nonplanar) graph~$G$ given along with a linear layout of cutwidth~$k$ can be solved in time~$\Oh^*((2-\varepsilon)^k)$.
\end{theorem}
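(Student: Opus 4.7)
The plan is to reuse the SETH-based lower bound of Lokshtanov, Marx and Saurabh for \IndependentSet parameterized by pathwidth, and verify that the very same reduction and the very same ordering already witness a bounded \emph{cutwidth}. Since cutwidth dominates pathwidth, matching the known pathwidth bound of $n + \Oh(1)$ up to an additive constant is the best one can hope for, and it is enough to rule out $\Oh^*((2-\varepsilon)^k)$ algorithms under SETH.

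First I would unwind the LMS construction: from a \qSAT instance with $n$ variables and $m$ clauses (with $q$ a fixed constant as supplied by SETH) it produces a graph $G$, an integer $t$, and a linear ordering $\pi$ of $V(G)$ such that $G$ has an independent set of size $t$ if and only if the formula is satisfiable. The graph has a grid-like shape: $n$ "rows," one per variable, whose vertices encode the current partial assignment, and $\Oh(m)$ "columns," one per clause plus some padding, connected by equality-propagation gadgets between consecutive columns and by constant-size clause-check gadgets attached locally to the column processing that clause. The layout $\pi$ visits columns left-to-right and traverses each column in a fixed internal order.

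Next I would count \emph{edges} rather than bags across an arbitrary gap of $\pi$. Crossing edges fall into two classes: (i) equality-propagation edges between adjacent columns, contributing exactly one edge per row for a total of $n$; and (ii) edges internal to, or incident on, the clause gadget currently being traversed. Because $q$ is constant, each clause gadget has constant size and therefore constant cutwidth. By ordering the vertices of a clause gadget according to an optimal layout of that gadget, and by placing it in a dedicated slot adjacent to the column it attaches to, both its internal edges and its few external attachment edges contribute only $\Oh(1)$ to any single cut. Summing the two contributions yields $\cutw_\pi(G) \leq n + \Oh(1)$.

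Finally, any algorithm that solves \IndependentSet in time $\Oh^*((2-\varepsilon)^k)$ on graphs supplied with a linear layout of cutwidth $k$ would, applied to the instance $(G,t)$ above with $k = n + \Oh(1)$, decide the original \qSAT formula in time $\Oh^*((2-\varepsilon)^n)$, contradicting SETH. The main obstacle is the bookkeeping in the cutwidth analysis: the pathwidth argument of LMS only tracks which vertices are simultaneously "live," whereas here one must ensure that no edge of the construction jumps across many columns, so that only the equality-propagation layer and the single local clause gadget contribute to any gap. Verifying this, possibly after a minor rearrangement of the internal vertex order within each column and an appropriate slot for each clause gadget, is essentially the only real piece of work behind Theorem~\ref{thm:ind:set:cutwidth:lb}.
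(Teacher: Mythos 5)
Your proposal is correct and follows essentially the same route as the paper's proof: reuse the Lokshtanov--Marx--Saurabh \textsc{Independent Set} construction verbatim, order the vertices column by column with each constant-size clause gadget interleaved locally, and observe that any gap is crossed by at most one edge per variable-path plus $\Oh(1)$ edges from the single clause gadget it intersects, giving cutwidth $n+\Oh(1)$. The paper carries out exactly this bookkeeping (placing the two path vertices $p_i^{2j-1},p_i^{2j}$ immediately after their clause-gadget neighbor), so there is nothing substantive to add.
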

\begin{proof}
This follows from the lower bound of Lokshtanov et al.~\cite[Thm. 3.1]{LokshtanovMS11} in terms of pathwidth. It suffices to extend the analysis and provide an analogue of their Lemma 3.3 to bound the cutwidth of the graph~$G$ that is constructed from an $n$-variable CNF formula by~$n + \Oh(1)$. Graph~$G$ consists of~$n+1$ copies of a graph~$G_1$; the copies are connected in a path-like fashion. We first recall the structure of~$G_1$ and bound its cutwidth.

Graph~$G_1$ consists of~$n$ paths~$P_1, \ldots, P_n$ of~$2m$ vertices each, called~$p_i^1, \ldots, p_i^{2m}$ for~$i \in [n]$, together with~$m$ clause gadgets~$\hat{C}_j$ for~$j \in [m]$. Each clause gadget is easily seen to be a graph of cutwidth~$\Oh(1)$. Between a clause gadget~$\hat{C}_j$ and a path~$P_i$ there is at most one edge, which connects to either~$p_i^{2j-1}$ or~$p_i^{2j}$. Each vertex of a clause gadget is adjacent to at most one vertex of one path. 

Using this knowledge we describe a linear layout~$\pi$ of~$G_1$ of cutwidth~$n + \Oh(1)$. It consists of~$m$ consecutive blocks~$B_1, \ldots, B_m$ of vertices. A block~$B_j$ contains the vertices of~$\hat{C}_j \cup \{ p_i^{2j-1}, p_i^{2j} \mid i \in [n]\}$, and is ordered according to the following process. Start from an optimal ordering for~$\hat{C}_j$, of cutwidth~$\Oh(1)$. For every vertex~$v$ of~$\hat{C}_j$ that is adjacent to a vertex on a path, say to~$P_i$, insert vertices~$p_i^{2j-1}, p_i^{2j}$ just after~$v$ in the ordering. For the paths that are not adjacent to~$\hat{C}_j$, put their two vertices~$p_i^{2j-1}, p_i^{2j}$ next to each other at the end of the block; the order among these pairs is not important. To see that the resulting ordering~$\pi$ has cutwidth~$n + \Oh(1)$, note that from each path~$P_i$, the vertices on~$P_i$ appear along~$\pi$ in their natural order. Hence for any vertex~$v \in V(G_1)$, the gap after vertex~$v$ is crossed by at most one edge from each path~$P_i$. A gap is crossed by at most one clause gadget, whose internal edges contribute~$\Oh(1)$ to the size of the cut. Finally, there is at most one edge from a clause gadget to a path that crosses the cut after~$v$: a vertex from~$\hat{C}_j$ that has a neighbor on a path~$P_i$ is immediately followed by two vertices from~$P_i$ that include its neighbor, removing that edge from later cuts. This proves that~$\cutw_{\pi}(G_1) \leq n + \Oh(1)$.

To see that~$\cutw_{\pi}(G) \leq n + \Oh(1)$, we note that~$G$ is obtained from~$n+1$ copies~$G_1, G_2, \ldots, G_{n+1}$ of~$G_1$ by connecting the last vertex on the $j$-th path in~$G_i$, to the first vertex of the $j$-th path in~$G_{i+1}$, for all~$i \in [n]$ and~$j \in [n]$. Hence the number of edges connecting any~$G_i$ to vertices in later copies is at most~$n$. From this it follows that by simply constructing the order~$\pi$ for each graph individually, and concatenating these, we obtain a linear ordering of~$G$ of cutwidth~$n + \Oh(1)$.
\end{proof}

\section{Lower bound for Dominating Set on graphs of bounded cutwidth}

\subsection{Proof of Theorem \ref{thm:domset:nonplanar:lb}}
{
\renewcommand{\thetheorem}{\ref{thm:domset:nonplanar:lb}}
\begin{theorem}
\nonplanardomsetthm
\end{theorem}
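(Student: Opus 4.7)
The plan is to imitate the strategy of Theorem~\ref{thm:ind:set:cutwidth:lb}, but starting from the \textsc{Dominating Set} lower bound of Lokshtanov, Marx, and Saurabh~\cite{LokshtanovMS11} in place of their \textsc{Independent Set} bound. Their reduction takes an $n$-variable $q$-SAT formula and produces, after the minor tweak alluded to in the excerpt, a graph $G$ whose pathwidth matches the exponent needed to refute SETH under a hypothetical $\Oh^*((3-\varepsilon)^{\pw})$-algorithm. Since cutwidth is at least pathwidth, the bound on pathwidth from \cite{LokshtanovMS11} is not enough; I would instead exhibit an explicit linear layout and argue that its cutwidth exceeds the pathwidth bound by only an additive constant.

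The first step is to recall the geometry of the LMS construction for \textsc{Dominating Set}: it is built from $n+1$ consecutive ``column'' copies of a base graph $G_1$, glued together in series. Each $G_1$ contains $n$ horizontal variable-propagation gadgets (the analogue of the paths $P_1,\ldots,P_n$ in the \textsc{Independent Set} construction), together with $m$ clause gadgets $\hat C_1,\ldots,\hat C_m$ of constant size each; a clause gadget attaches to any given propagation gadget by at most one edge, and every vertex of a clause gadget has at most one neighbour on the propagation gadgets. I would need to confirm by inspection of the LMS construction that these structural properties hold, possibly after absorbing small propagation sub-gadgets into the clause gadgets and into the paths.

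Given this structural picture, the second step is a direct adaptation of the layout construction used in Theorem~\ref{thm:ind:set:cutwidth:lb}. I partition the vertex set of $G_1$ into $m$ consecutive blocks $B_1,\ldots,B_m$, one per clause. Inside $B_j$ I start from an optimal (constant-cutwidth) layout of $\hat C_j$; immediately after each vertex of $\hat C_j$ that attaches to a propagation gadget $P_i$, I insert the pair of vertices of $P_i$ that belongs to column $j$; the pairs belonging to propagation gadgets not attached to $\hat C_j$ are appended at the end of the block in arbitrary order. Exactly as in Theorem~\ref{thm:ind:set:cutwidth:lb}, every cut within $B_j$ is crossed by at most one edge from each propagation gadget (giving $n$), by $\Oh(1)$ internal clause-gadget edges, and by at most one attachment edge (which is consumed immediately after the host vertex), for a total of $n+\Oh(1)$. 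Concatenating the layouts of the $n+1$ columns adds only $n$ further edges across the seams (one per propagation gadget), so $\cutw_\pi(G) \leq n + \Oh(1)$. Combining this with LMS's reduction then yields the SETH-based lower bound in the standard fashion.

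The main obstacle is the first step: verifying that the LMS \textsc{Dominating Set} construction really admits the clean ``propagation chains plus bounded-size clause gadgets'' description used above. The propagation gadgets for \textsc{Dominating Set} are intrinsically more intricate than the bare paths used for \textsc{Independent Set}, since they must encode three local states rather than two; one has to check both that each such gadget has the chain-of-pairs structure needed for the ``one edge per gadget per cut'' bound, and that clause-to-propagation attachments occur at controlled positions along the chain so that an attachment edge can always be discharged at the next vertex. Once this structural verification is in place, the cutwidth bound and the contradiction with SETH follow by exactly the same computation as in Theorem~\ref{thm:ind:set:cutwidth:lb}.
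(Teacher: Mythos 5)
Your high-level plan (take the LMS \textsc{Dominating Set} reduction, exhibit an explicit linear layout, and show its cutwidth exceeds the relevant bound by only an additive constant) is the right one, but the structural picture you base it on is wrong, and this breaks the final step. The LMS construction for \textsc{Dominating Set} is \emph{not} ``$n$ propagation chains plus constant-size clause gadgets.'' To obtain base $3$ rather than base $2$, LMS partition the $n$ variables into $t = \lceil n/\beta\rceil$ groups of size $\beta = \lfloor p\log 3\rfloor$ and represent each group by a sequence of \emph{group gadgets} $\widehat{B}^j_i$ with $\Oh(3^p)$ vertices each, consecutive gadgets in a row being joined by a matching of size $p$ between $p$ exit and $p$ entry vertices. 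Hence the unavoidable per-cut contribution of the ``horizontal'' structure is $tp \approx n/\log 3$, not $n$. This is not a technicality: the bound $\cutw_\pi(G)\leq n+\Oh(1)$ you aim for would be useless, since an $\Oh^*((3-\varepsilon)^k)$ algorithm applied to a graph of cutwidth $n+\Oh(1)$ runs in time $(3-\varepsilon)^{n+\Oh(1)} \gg 2^n$ and yields no contradiction with SETH. The correct target is $\cutw_{\pi'}(G') \leq tp + f(p,q)$ for a function $f$ of the constants $p,q$ only, so that the hypothetical algorithm runs in time roughly $3^{\lambda tp} \leq 2^{\delta n}$ for some $\delta<1$ after choosing $p$ large enough. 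The layout itself is then organized by \emph{columns} (one clause vertex followed by the $t$ group gadgets of that column, each laid out contiguously), with the $\Oh(q\cdot 3^p + 9^p)$ edges internal to a single column absorbed into $f(p,q)$.

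You also leave the ``minor tweak'' unidentified, and it matters for the cutwidth bound. The LMS graph contains a hub vertex $h$ adjacent both to the entry vertices of the first column and to the exit vertices of the last column ($2tp$ neighbors spread across the two ends of any reasonable layout); wherever you place $h$, about $tp$ of its edges cross every intermediate cut, roughly doubling the width. The fix is to split $h$ into two vertices $h_1,h_2$ (each with its own pendant neighbor), one serving the first column and one serving the last, placed at the two ends of the layout; one then checks that this changes the domination number by exactly one and preserves the equivalence with satisfiability. Without identifying and repairing this hub, the layout argument does not go through even after correcting the $tp$-versus-$n$ issue.
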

\addtocounter{theorem}{-1}
}
\begin{proof}
	The proof follows the argumentation of Lokshtanov et al.~\cite[Thm 4.1]{LokshtanovMS11} with small modifications along the way. To avoid having to repeat the entire proof, at several steps we only describe how to modify the existing construction.
	
	Suppose that there is an~$\varepsilon > 0$ such that \textsc{Dominating Set} on graphs given with a linear layout of cutwidth~$k$ can be solved in time~$\Oh^*((3-\varepsilon)^k)$. We will prove that SETH is false, by showing that it implies the existence of~$\delta > 0$ such that $n$-variable \qSAT can be solved in time~$\Oh^*((2-\delta)^n)$ for each fixed~$q$.\footnote{This is a somewhat weaker consequence than used by Lokshtanov et al., who obtain the consequence that \textsc{CNF-SAT} for clauses of \emph{arbitrary} size can be solved by a uniform algorithm in time~$\Oh^*((2-\delta)^n)$ for some~$\delta > 0$. By making more major modifications to the construction we could arrive at the same consequence, but for ease of presentation we will simply show that SETH fails.} We choose an integer~$p$ depending on~$\varepsilon$ in a manner that is described at the end of the proof. Consider now an $n$-variable input formula~$\phi$ of \qSAT for some constant~$q$. Let~$C_1, \ldots, C_m$ be the clauses of~$\phi$. We split the variables of~$\phi$ into groups~$F_1, \ldots, F_t$, each of size at most~$\beta := \lfloor \log 3^p \rfloor = \lfloor p \log 3 \rfloor$ so that~$t = \lceil n/\beta \rceil$. (Recall that all logarithms in this paper have base two.)
	
	We now follow the construction of Lokshtanov et al.~to produce a graph~$G$ that has a dominating set of size~$\ell := (p+1)tm(2pt+1)+1$ if and only if~$\phi$ is satisfiable (\cite[Lemmas 4.1, 4.2]{LokshtanovMS11}). We then modify the graph~$G$ slightly to obtain~$G'$ which has a dominating set of size~$\ell+1$ if and only if~$G$ has a dominating set of size~$\ell$. To describe the modification, we summarize the essential features of the graph~$G$ built in the original construction.
	
	The graph~$G$ consists of \emph{group gadgets}~$\widehat{B}^j_i$ for~$i \in [t]$ and~$j \in [m(2pt+1)]$, of \emph{clause vertices}~$\hat{c}^\ell_i$ for~$j \in [m]$ and~$0 \leq i < 2pt+1$, and of two special vertices~$h$ and~$h'$ (see~\cite[Figure 3]{LokshtanovMS11}). Each group gadget contains~$\Oh(3^p)$ vertices. It has~$p$ \emph{entry vertices} and~$p$ \emph{exit vertices}; these~$2p$ vertices are all distinct. For each~$i \in [t]$ and~$j \in [m(2pt+1) - 1]$ there is a matching between the exit vertices of~$\widehat{B}^{j}_i$ and the entry vertices of~$\widehat{B}^{j+1}_i$. There are no other edges between group gadgets. Each clause vertex~$\hat{c}^\ell_j$ is adjacent to at most~$q$ different group gadgets, corresponding to the groups that contain the literals in clause~$C_j$. The group gadgets to which $\hat{c}^\ell_j$ is adjacent belong to~$\{\widehat{B}^{m\ell + j}_i \mid i \in [t]\}$. Finally, vertex~$h'$ has degree one and is adjacent to~$h$. The other neighbors of~$h$ are the entry vertices of~$\{ \widehat{B}^1_i \mid i \in [t]\}$, and the exit vertices of~$\{\widehat{B}^{2pt+1}_i \mid i \in [t]\}$.
	
	With this summary of~$G$, our modification to obtain~$G'$ can be easily described. We remove vertices~$h$ and~$h'$ and replace them by~$h_1, h_2, h'_1, h'_2$. Vertices~$h'_1$ and~$h'_2$ have degree one and are adjacent to~$h_1$ and~$h_2$, respectively. Vertex~$h_1$ is further adjacent to the entry vertices of the group gadgets~$\{ \widehat{B}^1_i \mid i \in [t]\}$, and vertex~$h_2$ is further adjacent to the exit vertices of the group gadgets~$\{\widehat{B}^{2pt+1}_i \mid i \in [t]\}$. Essentially, we have split the vertex~$h$ into two vertices to reduce the cutwidth of the graph. Note that~$\optDS(G') = \optDS(G) + 1$. The presence of the degree-one vertices ensures that there is always a minimum dominating set of~$G$ that contains~$h$, and of~$G'$ that contains both~$h_1$ and~$h_2$. Such dominating sets may be transformed into one another by exchanging~$h$ with~$h_1$ and~$h_2$, since~$h$ dominates the same as~$h_1$ and~$h_2$ combined. Hence we find that~$G'$ has a dominating set of size~$\ell+1$ if and only if~$\phi$ is satisfiable.
	
	We proceed to bound the cutwidth of~$G'$. For~$j \in [2pt+1]$, let the \emph{$j$-th column} of~$G'$ consist of the group gadgets~$\{\widehat{B}^{j}_i \mid i \in [t]\}$, together with the unique clause vertex that is adjacent to those group gadgets. The linear layout~$\pi'$ of~$G'$ starts with vertices~$h'_1$ and~$h_1$. Then, for each~$j \in [2pt+1]$, it first has the clause vertex of the $j$-th column, followed by the contents of the group gadgets in that column, one gadget at a time. It ends with~$h_2$ and finally~$h'_2$.
	
	\begin{claim} \label{claim:domset:cutwidth}
		$\cutw_{\pi'}(G') \leq tp + \Oh(q \cdot 3^p + (3^p)^2)$.
	\end{claim}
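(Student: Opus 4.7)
The plan is to bound the number of edges crossing an arbitrary gap in~$\pi'$ by partitioning~$E(G')$ into four categories and analyzing each in turn: (i) matching edges between consecutive columns, (ii) internal edges of the group gadgets, (iii) edges from a clause vertex to its neighboring group gadgets in the same column, and (iv) edges incident to~$h_1, h'_1, h_2, h'_2$. For any fixed gap I would first identify the column of~$\pi'$ containing the gap (or one of the auxiliary blocks at the ends), and within that column the unique group gadget being traversed when applicable.

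The dominant term~$tp$ will come from category~(i). The key observation is that any group gadget~$\widehat{B}^{j'}_{i'}$ other than the one currently being traversed lies entirely on one side of the gap. Thus, for each row~$i' \in [t]$, at most one of the two matchings incident to the row's ``interface'' gadget contributes~$p$ edges across the cut; matchings whose endpoints lie in columns strictly on one side of the gap are fully on that side and contribute nothing. Summing one contribution of~$p$ per row yields~$tp$, and the two matchings incident to the current gadget contribute at most~$2p$ further edges.

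For categories~(ii)--(iv) only local contributions arise. The current group gadget has~$\Oh(3^p)$ vertices and hence at most~$\Oh((3^p)^2)$ internal edges available to cross the gap. Clause vertices outside the current column have all neighbors on one side of the gap and contribute nothing; a clause vertex inside the current column contributes at most~$\Oh(q \cdot 3^p)$ edges, being adjacent to at most~$q$ gadgets each of size~$\Oh(3^p)$. The edges~$\{h_1, h'_1\}$ and~$\{h_2, h'_2\}$ contribute~$\Oh(1)$; the remaining edges from~$h_1$ (resp.~$h_2$) to the entry (resp.\ exit) vertices of column~$1$ (resp.\ column~$2pt+1$) are only active when the gap sits inside that column. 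There I would argue row-by-row that, since entry and exit vertices of a gadget are disjoint, an unplaced $h_1$-edge in one row and a placed matching edge in another are never both present in excess of the per-row budget of~$p$, so the combined contribution of~(i) and the $h$-edges remains~$tp + \Oh(p)$.

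Summing the four categories gives~$tp + 2p + \Oh((3^p)^2) + \Oh(q \cdot 3^p) + \Oh(1) = tp + \Oh(q \cdot 3^p + (3^p)^2)$, since~$p = \Oh(3^p)$. The most delicate part I expect is the combined analysis for gaps inside columns~$1$ and~$2pt+1$, where both the matching edges and the $h$-vertex edges can be simultaneously active; the essential point to verify is that rows fully before the current one contribute only via matching edges to the next column, rows fully after contribute only via $h$-edges to the unplaced entries/exits, and these two classes of contributions partition the $tp$ budget rather than doubling it.
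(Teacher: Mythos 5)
Your proposal is correct and takes essentially the same approach as the paper: the paper organizes the count as a case analysis on the position of the vertex defining the cut, whereas you organize it by edge category, but the substance is identical --- each non-current gadget lies wholly on one side of the gap so each row contributes $p$ matching (or $h$-star) edges, the current gadget contributes $\Oh((3^p)^2)$ internal edges and at most $2p$ inter-column edges, and the column's clause vertex contributes $\Oh(q \cdot 3^p)$. Your handling of the boundary columns, where the $h_1$/$h_2$ stars replace one of the two matchings in the per-row budget, matches the paper's parenthetical treatment of the same point.
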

	\begin{claimproof}
		Consider an arbitrary vertex~$v^* \in V(G')$ and the cut consisting of the edges crossing the gap after~$v^*$ in layout~$\pi'$. The cut after~$h'_1$ or~$h_2$ has size one. The cut after~$h_1$ consists of the~$tp$ edges to the entry vertices of the group gadgets in the first column, and the cut after~$h'_2$ is empty. It remains to consider the case that~$v^*$ belongs to some column~$j$. 
		
		If~$v^*$ is the clause gadget of column~$j$, then the cut after~$v^*$ consists of the edges from~$v^*$ to its neighbors in the group gadgets in that column, together with the~$tp$ edges on the~$t$ matchings of size~$p$ that connect the group gadgets in column~$j-1$ to the group gadgets in column~$j$. Since each group gadget has~$\Oh(3^p)$ vertices and a clause vertex is adjacent to at most~$q$ different group gadgets, it follows that the size of the cut after~$v^*$ is~$tp + \Oh(q \cdot 3^p)$.
		
		If~$v^*$ is not the clause gadget of column~$j$, then it belongs to some group gadget~$\widehat{B}^j_i$ of column~$j$. For all other group gadgets, its vertices occur on the same side of~$v^*$ in the ordering. Hence the cut after~$v^*$ contains edges that are internal to at most one group gadget. Since a group gadget has~$\Oh(3^p)$ vertices, it has~$\Oh((3^p)^2)$ edges that can be contributed to the cut. For all group gadgets in column~$j$ that appeared before~$\widehat{B}^j_i$ in the ordering, the~$p$ matching edges from their exit vertices to the entry vertices of the next column (or to~$h_2$) belong to the cut. Similarly, for the group gadgets in column~$j$ that appear after~$\widehat{B}^j_i$, the~$p$ matching edges from their entry vertices to the exits of the previous column belong to the cut. For~$\widehat{B}^j_i$ itself, there are at most~$2p$ edges connecting to other columns in the cut. The only other edges that can be in the cut are from the group gadgets of column~$j$ to its clause gadget, and as argued above there are~$\Oh(q \cdot 3^p)$ of those. It follows that the size of the cut after~$v^*$ is at most~$tp + \Oh(q \cdot 3^p + (3^p)^2)$.
	\end{claimproof}
	
	Using this construction of~$G'$ and linear layout~$\pi'$ we complete the proof. Suppose that \textsc{Dominating Set} on graphs with a given linear layout of cutwidth~$k$ can be solved in~$\Oh^*((3-\varepsilon)^k) = \Oh^*(3^{\lambda k})$ time, for~$\lambda < 1$. We choose~$p$ large enough that~$\lambda \cdot \frac{p}{\lfloor p \log 3 \rfloor} \leq \frac{\delta}{\log 3}$ for some~$\delta < 1$. Choose a function~$f(p,q)$ such that the cutwidth in Claim~\ref{claim:domset:cutwidth} is bounded by~$tp + f(p,q)$. Then an instance~$\phi$ of \qSAT for fixed~$q$ can be solved by transforming it into an instance of \textsc{Dominating Set} in time polynomial in~$\phi + 3^p$ and applying the assumed algorithm in time 
\allowdisplaybreaks
	\begin{align*}
\Oh^*(3^{\lambda (tp + f(p,q))}) & \leq \Oh^*(3^{\lambda p \lceil n/\beta \rceil}) & \text{$\lambda \cdot f(p,q) \in \Oh(1),~t = \lceil n/\beta \rceil$} \\ 
& = \Oh^*(3^{\lambda p \lceil \frac{n}{\lfloor p \log 3 \rfloor} \rceil }) & \text{$\beta = \lfloor p \log 3 \rfloor$} \\
& \leq \Oh^*(3^{\lambda p (\frac{n}{\lfloor p \log 3 \rfloor} + 1)}) \leq \Oh^*(3^{\lambda p \frac{n}{\lfloor p \log 3 \rfloor}}) & \text{$\lambda p \in \Oh(1)$} \\
& \leq \Oh^*(3^{\frac{\delta n}{\log 3}) }) \leq \Oh^*(2^{\delta n}). & \text{choice of~$\delta$}
	\end{align*}
	We used the fact that since~$p$ and~$q$ are constants, their contributions can be absorbed into the~$\Oh^*$ notation. Since this shows that \qSAT for any constant~$q$ can be solved in time~$\Oh^*(2^{\delta n})) = \Oh^*((2-\delta')^n)$ for some~$\delta' < 1$, this contradicts SETH and concludes the proof of Theorem~\ref{thm:domset:nonplanar:lb}.
\end{proof}

\end{document}